\documentclass[11pt,paper=a4,DIV=12,abstract=true]{scrartcl}

\usepackage[T1]{fontenc}
\usepackage[utf8]{inputenc}
\usepackage{lmodern}
\usepackage{microtype}

\usepackage{amsmath,amssymb,amsthm,mathtools}
\usepackage{mathrsfs}
\usepackage{bm}

\usepackage{graphicx}
\usepackage{booktabs}
\usepackage{threeparttable}

\usepackage{aliascnt}
\usepackage[dvipsnames]{xcolor}
\definecolor{linkblue}{RGB}{0,45,90}

\usepackage{algorithm}
\usepackage{algorithmic}

\usepackage[round,authoryear]{natbib}
\usepackage[unicode=true,colorlinks=true,linkcolor=linkblue,citecolor=linkblue,urlcolor=linkblue]{hyperref}
\usepackage[nameinlink,noabbrev,capitalize]{cleveref}
\setkomafont{disposition}{\normalfont\bfseries}
\setkomafont{title}{\normalfont\bfseries}
\setkomafont{author}{\normalfont}
\setkomafont{date}{\normalfont}
\setkomafont{caption}{\small}
\setkomafont{captionlabel}{\small\bfseries}

\numberwithin{equation}{section}

\theoremstyle{plain}
\newtheorem{theorem}{Theorem}[section]

\newaliascnt{proposition}{theorem}
\newtheorem{proposition}[proposition]{Proposition}
\aliascntresetthe{proposition}

\newaliascnt{lemma}{theorem}
\newtheorem{lemma}[lemma]{Lemma}
\aliascntresetthe{lemma}

\newaliascnt{corollary}{theorem}

\aliascntresetthe{corollary}

\theoremstyle{definition}

\newaliascnt{assumption}{theorem}

\aliascntresetthe{assumption}

\newaliascnt{definition}{theorem}
\newtheorem{definition}[definition]{Definition}
\aliascntresetthe{definition}

\newaliascnt{example}{theorem}
\newtheorem{example}[example]{Example}
\aliascntresetthe{example}

\theoremstyle{remark}

\newaliascnt{remark}{theorem}
\newtheorem{remark}[remark]{Remark}
\aliascntresetthe{remark}

\crefname{theorem}{Theorem}{Theorems}
\Crefname{theorem}{Theorem}{Theorems}
\crefname{proposition}{Proposition}{Propositions}
\Crefname{proposition}{Proposition}{Propositions}
\crefname{lemma}{Lemma}{Lemmas}
\Crefname{lemma}{Lemma}{Lemmas}
\crefname{corollary}{Corollary}{Corollaries}
\Crefname{corollary}{Corollary}{Corollaries}
\crefname{assumption}{Assumption}{Assumptions}
\Crefname{assumption}{Assumption}{Assumptions}
\crefname{definition}{Definition}{Definitions}
\Crefname{definition}{Definition}{Definitions}
\crefname{example}{Example}{Examples}
\Crefname{example}{Example}{Examples}
\crefname{remark}{Remark}{Remarks}
\Crefname{remark}{Remark}{Remarks}

\DeclareMathOperator{\Var}{Var}

\newcommand{\R}{\mathbb{R}}
\newcommand{\E}{\mathbb{E}}

\DeclarePairedDelimiter{\abs}{\lvert}{\rvert}
\DeclarePairedDelimiter{\norm}{\lVert}{\rVert}

\newcommand{\keywords}[1]{\par\medskip\noindent\textbf{Keywords.} #1}
\newcommand{\msc}[1]{\par\smallskip\noindent\textbf{MSC 2020.} #1}
\newcommand{\jel}[1]{\par\smallskip\noindent\textbf{JEL classification.} #1}

\hypersetup{
  pdftitle={On the approximation of posterior laws in compound loss models by conditional Wasserstein GANs},
  pdfauthor={Aleksandar Arandjelović and Pavel V. Shevchenko and George Tzougas},
    pdfkeywords={amortized Bayesian inference; conditional Wasserstein GANs; compound loss models; simulation-based calibration; Bayesian credibility; natural catastrophe risk}
}

\title{{\Large On the Approximation of Posterior Laws in Compound Loss Models by Conditional Wasserstein GANs}}

\author{%
Aleksandar Arandjelović\thanks{Department of Mathematics, ETH Zürich, Zürich, Switzerland, and Institute for Statistics and Mathematics, Vienna University of Economics and Business, Vienna, Austria. Email: \href{mailto:aleksandar.arandjelovic@math.ethz.ch}{aleksandar.arandjelovic@math.ethz.ch}. Corresponding author.}
\and
Pavel V. Shevchenko\thanks{Department of Actuarial Studies and Business Analytics, Macquarie University, Sydney, Australia. Email: \href{mailto:pavel.shevchenko@mq.edu.au}{pavel.shevchenko@mq.edu.au}.}
\and
George Tzougas\thanks{School of Mathematical and Computer Sciences \& the Maxwell Institute for Mathematical Sciences, Department of Actuarial Mathematics and Statistics \& Duncan Laboratory for Insurance Data Science, Heriot-Watt University, Edinburgh, UK. Email: \href{mailto:george.tzougas@hw.ac.uk}{george.tzougas@hw.ac.uk}.}
}

\date{\today}

\begin{document}

\maketitle

\begin{abstract}
Bayesian inference in compound loss models must often be repeated across policies, market scenarios, and prior specifications.
Outside conjugate cases, this may require repeated numerical integration or Markov chain Monte Carlo (MCMC).
We formulate this problem as amortized posterior approximation and construct a conditional Wasserstein generative adversarial network conditioned on sufficient statistics, prior mean and coefficient of variation, and mixture weights of prior families.
Notably, a single shared generator is able to approximate the posterior laws of both the Poisson intensity and the Pareto shape parameter under mixtures of Gamma, inverse-Gaussian, and lognormal priors.
We assess the approximation by simulation-based calibration and by comparisons with analytical posteriors, deterministic quadrature, and extensive MCMC simulations.
In an application to data on extreme natural catastrophe losses, we produce rolling one-year posterior predictive distributions, and examine the effects of heavy-tailed severity and prior-family uncertainty on aggregate tail risk.
\end{abstract}

\keywords{Amortized Bayesian inference; conditional Wasserstein GANs; compound loss models; simulation-based calibration; Bayesian credibility; natural catastrophe risk}
\msc{Primary 62P05; Secondary 62F15, 68T07}
\jel{C11; C15; C45; G22}

\clearpage

\section{Introduction}\label{sec:introduction}
Bayesian posteriors formalize a question that has been at the heart of insurance mathematics for more than a century: how to learn about heterogeneous risks from limited, incomplete, or noisy observations?
This classical actuarial problem dates back at least to Whitney's theory of experience rating \citep{whitney1918theory}, through Bailey's generalized credibility theory \citep{bailey1945generalized} and Longley-Cook's systematic introduction to credibility theory \citep{longleycook1962introduction}, to the explicitly Bayesian interpretation of \citet{mayerson1964bayesian}.
Typically, insurers observe claims, exposures, covariates, portfolio history, and market information, while the underlying risk characteristics are only partially observed.
The posterior law updates prior actuarial knowledge in light of these observations, while the posterior predictive law propagates the resulting uncertainty to future losses, whose quantiles, tail probabilities, and other risk measures may be of interest.
Classical credibility models can be viewed as a tractable instance of this general learning problem.
In particular, the B\"uhlmann and B\"uhlmann--Straub models yield the best linear predictors of the conditional risk mean under squared-error loss \citep{buhlmann1967experience,buhlmann1970glaubwurdigkeit,buhlmann2005course}.
For conditionally independent and identically distributed (i.i.d.) observations from a simple exponential family equipped with its natural conjugate prior, \citet{jewell1974acredible} further showed that the resulting credibility premium coincides with the Bayes posterior expectation of the conditional risk mean.

Examples of conjugate actuarial models include Poisson--Gamma, Binomial--Beta, and Normal--Normal under known variance.
There, the posterior remains in the same parametric family as the prior, and the posterior mean takes the familiar credibility form: prior information and observed experience are combined through a credibility factor.
\citet{bailey1950credibility} and \citet{mayerson1964bayesian} showed that the Bayesian posterior mean takes the exact linear credibility form for several special prior--likelihood combinations, while \citet{jewell1974acredible} placed these cases within the broader structure of simple exponential families with natural conjugate priors.
However, this analytical convenience comes at the cost of modelling flexibility, as the likelihood and prior must be chosen from compatible parametric families to retain analytically tractable Bayesian updating.
For this reason, conjugate credibility models are best viewed as highly tractable benchmark cases for Bayesian actuarial learning, against which modern hierarchical, simulation-based, and machine-learning posterior approximations can be assessed.

The trade-off between conjugacy and modelling flexibility becomes particularly pronounced in actuarial models with richer data structures and more complex dependence.
In Bayesian reserving, \citet{verrall1990bayes} embedded the chain-ladder model in a hierarchical Bayesian framework, deriving empirical Bayes estimators that admit a credibility interpretation.
With hierarchical portfolio structures, covariates, censoring and truncation, deductibles and policy limits, reporting delays, dependence, or heavy-tailed severities, conjugacy is generally lost, and posterior predictive distributions rarely admit closed-form expressions.
The resulting loss of analytic tractability made computational methods increasingly important: \citet{makov1996bayesian} reviewed how advances in Bayesian computational methodology opened the way to a fully Bayesian treatment of certain actuarial problems, while \citet{scollnik1996mcmc,scollnik2001actuarial} presented and illustrated MCMC methods for actuarial applications, including their implementation through BUGS (Bayesian inference using Gibbs sampling).
The increased modelling flexibility of non-conjugate specifications, however, requires numerical posterior inference together with convergence diagnostics, sensitivity analyses, and posterior predictive checks.
Non-conjugate Bayesian models thus extend credibility beyond analytically tractable benchmark cases, while making accurate and efficient posterior approximation an important computational problem.

Bayesian variational inference provides one methodological approach to this problem.
Variational inference recasts posterior approximation as an optimization problem over a tractable family of distributions, typically by maximizing the evidence lower bound or, equivalently, minimizing the reverse Kullback--Leibler divergence to the posterior.
A systematic treatment of variational methods for approximate inference in graphical models was given by \citet{jordan1999introduction}, while \citet{wainwright2008graphical} developed variational representations of likelihoods and marginal probabilities using the duality between the cumulant generating function and the entropy for exponential families, and \citet{blei2017variational} subsequently surveyed variational inference as an optimization-based approach to approximate Bayesian inference.
More recent actuarial applications include \citet{kim2022approximation}, who use variational Bayes to approximate Bayesian credibility premiums in a zero-inflated Poisson frequency model with unobserved policyholder heterogeneity, while \citet{avanzi2024machine} rely on variational inference for parameter estimation in a generalized linear mixed model neural network designed for high-cardinality categorical features.
However, variational posterior approximation is not innocuous.
A restrictive variational family may understate posterior dispersion and distort dependence, predictive quantiles, and tail probabilities.
Its adequacy should therefore be examined by posterior predictive and calibration diagnostics, sensitivity analysis and, where feasible, comparison with more accurate numerical benchmarks.

Actuarial applications rarely concern a single posterior law in isolation.
Pricing, reserving, portfolio monitoring, and stress testing require posterior laws to be computed and updated across policies, portfolios, valuation dates, prior specifications, and market scenarios, often together with joint predictive laws for related risks or development cells.
Classical credibility theory already contains elements of this perspective.
\citet{norberg2004credibility} surveys credibility theory in terms of experience rating, Bayes and linear Bayes estimation, empirical Bayes methods, hierarchical models, and recursive computation, while \citet{hachemeister1975credibility} extended credibility to regression models by combining state-specific and countrywide trend information.
In claims reserving, \citet{england2002stochastic} emphasized the full predictive distribution of reserve outcomes, while \citet{dealba2002bayesian} and \citet{ntzoufras2002bayesian} developed Bayesian predictive models for outstanding claims and liabilities.
Although a single model fit may produce such a predictive law, the analysis must be updated as new development data arrive and repeated across portfolios, valuation dates, prior specifications, and stress scenarios.
Beyond conjugate models, posterior computation and validation may become prohibitively expensive when repeated over many such conditioning inputs.
This motivates learning the posterior or posterior predictive law as a conditional map from claims experience, exposure information, covariates, and prior inputs to an approximate distribution.

Let $\vartheta$ be a latent model parameter taking values in a parameter space $\Theta$, let $\xi$ collect actuarial conditioning information, such as prior hyperparameters, covariates, or stress scenarios, and write $\bm{Y}_n=(Y_1,\ldots,Y_n)$ for a vector of $n$ recorded observations.
Let $S_n$ denote a sufficient statistic for $\vartheta$ conditional on $\xi$, and write $s_n$ for its realized value.
Setting $c=(n,s_n,\xi)$, the computational target is the law-valued map
\begin{equation*}
c\mapsto P_c\coloneqq\mathcal L(\vartheta\mid n,S_n=s_n,\xi),
\end{equation*}
or, after mixing a claims law with respect to $P_c$, the corresponding posterior predictive law of future losses.
An amortized posterior model replaces repeated numerical inference by a fitted conditional approximation $Q_{\theta,c}$ of $P_c$ that can be evaluated at new conditioning values $c$.
A related form of amortization appears in auto-encoding variational Bayes, where \citet{kingma2014autoencoding} fit an approximate inference, or recognition, model to intractable posteriors.
More directly, \citet{radev2020bayesflow} construct a globally amortized posterior estimator from simulated data; see also the broader account of simulation-based inference in \citet{cranmer2020frontier}.
Thus, a single fitted parameter $\theta^\ast$ is shared across conditioning values, replacing repeated pointwise posterior computation by one conditional approximation.

Conditional Wasserstein generative adversarial networks (cWGANs), a conditional version of the generative-adversarial framework of \citet{goodfellow2014generative}, are appealing in this setting because they provide a direct representation of $Q_{\theta,c}$ through a conditional generator $G_\theta$.
Given $c=(n,s_n,\xi)$ and independent noise $Z\sim\nu_Z$, the generator induces the law
\begin{equation*}
Q_{\theta,c}=\bigl(G_\theta(c,\cdot)\bigr)_\#\nu_Z,
\end{equation*}
as the pushforward of $\nu_Z$ under $G_\theta(c,\cdot)$, and $\theta$ is learned by minimizing a neural approximation to the $1$-Wasserstein distance between $P_c$ and $Q_{\theta,c}$, averaged over the conditioning values.
The chosen notion of distance follows the Wasserstein formulation of \citet{arjovsky2017wasserstein}.
For additional background on the Wasserstein distance and its connection to optimal transport, we refer to \citet{villani2009optimal}.
By approximating $P_c$ itself rather than only selected posterior functionals, the cWGAN adopts an inferential target that is natural in actuarial mathematics, where distributions of aggregate losses and measures of reserve uncertainty have long been objects of direct calculation.
For example, \citet{panjer1981recursive} derived a recursion for the evaluation of compound distributions, while \citet{heckman1983calculation} gave an algorithm for calculating cumulative probabilities and excess pure premiums, and \citet{mack1993distribution} derived a distribution-free standard error for chain-ladder reserve estimates.

In this paper, we study conditional Wasserstein GANs as amortized approximations of posterior laws in Bayesian credibility models for claim frequency and severity.
We take the latent frequency or severity parameter as the object of inference, and posterior predictive loss distributions are obtained by mixing the corresponding claims law with respect to the generated posterior distribution.
Sufficient statistics are used to reduce the dimension of the conditioning information.
The accuracy of the resulting approximation is assessed in both conjugate and non-conjugate models against analytical and numerical reference posteriors.
Finally, a single generator is trained over mixtures of Gamma, inverse-Gaussian, and lognormal priors and is applied to historical disaster data to assess the risk of extreme natural-catastrophe losses.

The main contributions of this work are threefold:
\begin{enumerate}
\item We formulate repeated Bayesian inference in compound loss models as the amortized approximation of a conditional posterior law and develop a conditional Wasserstein GAN that generates posterior draws from sufficient statistics, prior moments, and prior-family mixture weights.
\item We construct a single conditional generator for Poisson frequency and Pareto severity models under mixtures of Gamma, inverse-Gaussian, and lognormal priors. Its accuracy is assessed in both conjugate and non-conjugate cases, using analytical posteriors where available and deterministic quadrature and MCMC benchmarks otherwise.
\item We apply the resulting framework to extreme natural-catastrophe losses recorded in the Emergency Events Database (EM-DAT, \citet{delforge2025emdat}), where we infer the prior-family mixture weights for frequency and severity, produce rolling one-year posterior-predictive distributions of aggregate losses, and examine the effects of heavy-tailed severity and prior-family uncertainty on catastrophe tail risk.
\end{enumerate}

The remainder of the paper is organized as follows.
In \cref{sec:problem_formulation}, we formulate the Bayesian inference problem and introduce the frequency and severity models considered throughout.
In \cref{sec:cwgan}, we develop the conditional Wasserstein GAN framework for amortized posterior approximation.
In \cref{sec:simulation_study}, we assess the numerical accuracy of the method and apply the fitted generator to extreme natural-catastrophe losses recorded in EM-DAT.
\Cref{sec:conclusion} concludes.

\section{Problem formulation}\label{sec:problem_formulation}
We consider claims experience arising under a variety of actuarial risk models, including the B\"uhlmann--Straub model and hierarchical models for claim frequency and severity.
Let $\vartheta$ denote a latent parameter taking values in a parameter space $\Theta\subseteq\R^d$, let $\xi$ collect actuarial conditioning information, such as prior hyperparameters, covariates, or stress scenarios, and let $\pi_\xi$ denote the prior law of $\vartheta$.
For a risk unit with $n\in\mathbb N$ recorded observations in a measurable space $(\mathsf Y,\mathcal Y)$, write $\bm{Y}_n=(Y_1,\ldots,Y_n)$, and, for each $j=1,\ldots,n$, let $\vartheta\mapsto F_{\vartheta,\xi,j}$ be a probability kernel from $\Theta$ to $(\mathsf Y,\mathcal Y)$ giving the conditional law of $Y_j$.
Under conditional independence given $(\vartheta,\xi)$, the joint law of $(\vartheta,\bm Y_n)$ conditional on $\xi$ is
\begin{equation*}
\pi_\xi(\mathrm{d}\vartheta)\prod_{j=1}^n F_{\vartheta,\xi,j}(\mathrm{d}y_j),
\qquad (\vartheta, \bm{y}_n)\in\Theta\times\mathsf Y^n.
\end{equation*}
In the frequency model, the index $j$ labels an exposure period (e.g., a year), whereas in the severity model it labels an observed claim, so $n$ is the number of exposure periods in the former and the number of observed claim severities in the latter.

For the frequency model, we take $\Theta=(0,\infty)$ and $\vartheta=\Lambda$, where $\Lambda$ denotes the latent annual claim intensity with prior law $\pi_\xi$.
Under unit exposure, the annual counts are conditionally independent given $(\Lambda,\xi)$ and satisfy
\begin{equation*}
Y_j\mid(\Lambda=\lambda,\xi)\sim\mathrm{Poisson}(\lambda),
\qquad j=1,\ldots,n.
\end{equation*}
For observations $\bm y_n=(y_1,\ldots,y_n)\in\mathbb N_0^n$, the likelihood of $\lambda$ is
\begin{equation*}
L_n(\lambda;\bm y_n)=\prod_{j=1}^n\frac{e^{-\lambda}\lambda^{y_j}}{y_j!}=\frac{e^{-n\lambda}\lambda^{y_+}}{\prod_{j=1}^n y_j!},\qquad y_+=\sum_{j=1}^n y_j.
\end{equation*}
Since the factor $(\prod_{j=1}^n y_j!)^{-1}$ is independent of $\lambda$, one may take
$S_n=Y_+\coloneqq\sum_{j=1}^nY_j$ and $s_n=y_+$.
Moreover, $Y_+\mid(\Lambda=\lambda,\xi)\sim\mathrm{Poisson}(n\lambda)$.
This simplification is due to properties of the Poisson distribution and conditional independence.
The cWGAN formulation developed below is not restricted to this setting, although richer dependence structures generally require a different conditioning summary.
We leave such extensions, for example through copula dependence or additional latent effects, to future work.

For the severity model, we fix a threshold $x_{\min}>0$, take $\Theta=(0,\infty)$ and $\vartheta=A$, and let $A$ denote the latent Pareto shape parameter with prior law $\pi_\xi$.
The severities are conditionally independent given $(A,\xi)$ and, for $j=1,\ldots,n$, satisfy
\begin{equation*}
Y_j\mid(A=a,\xi)\sim\mathrm{Pareto}(x_{\min},a), \qquad f_a(y)=a x_{\min}^{a}y^{-(a+1)},\quad y\ge x_{\min}.
\end{equation*}
For observations $\bm{y}_n\in[x_{\min},\infty)^n$, the likelihood of $a$ is
\begin{equation*}
L_n(a;\bm{y}_n)=\prod_{j=1}^n a x_{\min}^{a}y_j^{-(a+1)}=\left(\prod_{j=1}^n y_j^{-1}\right)a^n\exp(-a t_n),\qquad t_n=\sum_{j=1}^n\log\left(\frac{y_j}{x_{\min}}\right).
\end{equation*}
Equivalently, the log-excesses $\log(Y_j/x_{\min})$ are conditionally independent exponential variables with rate $a$, so $T_n\coloneqq\sum_{j=1}^n\log(Y_j/x_{\min})$ satisfies $T_n\mid(A=a,\xi)\sim\mathrm{Gamma}(n,a)$ in shape--rate parameterization, and one may take $S_n=T_n$ and $s_n=t_n$.
As in the frequency model, this one-dimensional simplification is a consequence of the properties of the Pareto distribution and conditional independence, and the cWGAN formulation developed below is not restricted to this specification.
Moreover, while we treat $x_{\min}$ as fixed throughout, the method could incorporate threshold uncertainty by assigning it a prior, which we leave to future work.
We recall that the conditional mean is finite precisely when $a>1$, in which case $\E[Y_j\mid A=a,\xi]=a x_{\min}/(a-1)$, while the conditional variance is finite precisely when $a>2$, in which case $\Var(Y_j\mid A=a,\xi)=a x_{\min}^2/{(a-1)^2(a-2)}$.

For fixed $\xi$, the frequency model is a B\"uhlmann model, equivalently the special case of B\"uhlmann--Straub with unit exposure.
If $m_\xi=\E_{\pi_\xi}[\Lambda]<\infty$ and $0<\Var_{\pi_\xi}(\Lambda)<\infty$, then $\E[Y_j\mid\Lambda,\xi]=\Var(Y_j\mid\Lambda,\xi)=\Lambda$, and the B\"uhlmann predictor of the conditional mean is
\begin{equation*}
\widehat{\Lambda}_{n,\xi}^{\mathrm B}=Z_{n,\xi}\frac{Y_+}{n}+(1-Z_{n,\xi})m_\xi,\qquad Z_{n,\xi}=\frac{n}{n+m_\xi/\Var_{\pi_\xi}(\Lambda)}.
\end{equation*}
This is the mean-square optimal linear predictor of $\Lambda$ based on $Y_1,\ldots,Y_n$ which, in the Poisson--Gamma case, also equals $\E[\Lambda\mid\bm{Y}_n,\xi]$, although this identity need not hold for a general prior.
For the Pareto severity model, a classical B\"uhlmann premium would instead require $\E_{\pi_\xi}[\Var(Y_j\mid A,\xi)]<\infty$ and $\Var_{\pi_\xi}\{\E[Y_j\mid A,\xi]\}<\infty$.
These conditions are not satisfied by the Gamma, inverse-Gaussian, and lognormal severity priors used below: each assigns positive probability to $A\le2$, where the conditional second moment is infinite, and to $A\le1$, where even the conditional mean is infinite.
Accordingly, the severity analysis below is formulated in terms of the posterior law of $A$, rather than a B\"uhlmann credibility premium for the conditional mean $\E[Y_j\mid A,\xi]$.

For either of the two models above, let $L_{n,\xi}(\vartheta;\bm y_n)$ denote the corresponding likelihood conditional on $\xi$.
For fixed $\xi$, $n$, and a realized history $\bm y_n$, define the marginal likelihood
\begin{equation*}
M_{n,\xi}(\bm y_n)=\int_\Theta L_{n,\xi}(\vartheta;\bm y_n)\,\pi_\xi(\mathrm{d}\vartheta),
\end{equation*}
and suppose that $0<M_{n,\xi}(\bm y_n)<\infty$.
By Bayes' formula, the posterior law of $\vartheta$ is then the probability measure $P_{n,\xi}(\,\cdot\mid\bm y_n)$ given by
\begin{equation}\label{eq:posterior-law}
P_{n,\xi}(B\mid\bm y_n)\propto\int_B L_{n,\xi}(\vartheta;\bm y_n)\,\pi_\xi(\mathrm{d}\vartheta),\qquad B\in\mathcal B(\Theta),
\end{equation}
with normalizing constant $M_{n,\xi}(\bm y_n)$.
For the frequency and severity models, the corresponding posterior densities with respect to $\pi_\xi$ are therefore proportional, as functions of $\lambda$ and $a$, respectively, to $e^{-n\lambda}\lambda^{y_+}$ and $a^n e^{-a t_n}$.
Since the posterior must be updated across observation counts, claims histories, and prior specifications, the computational problem becomes to learn the map $(n,\bm y_n,\xi)\mapsto P_{n,\xi}(\,\cdot\mid\bm y_n)$.

Although the latent parameter is scalar in each of the two models (frequency and severity) considered here, the posterior law of $\vartheta$ identified in \cref{eq:posterior-law} does depend on the conditioning information $(n,\bm y_n,\xi)$.
Whenever neither the normalizing constant $M_{n,\xi}(\bm y_n)$ nor a direct posterior sampler is available in closed form, numerical quadrature or iterative simulation must be repeated for each conditioning value.
Let $K_\xi$ be a probability kernel from $(\Theta,\mathcal B(\Theta))$ to $(\R_+,\mathcal B(\R_+))$, and write $K_{\vartheta,\xi}=K_\xi(\vartheta,\cdot)$ for the conditional law of a future loss given $(\vartheta,\xi)$.
The posterior predictive law is then
\begin{equation*}
\Pi_{n,\xi}(B\mid\bm y_n)=\int_\Theta K_{\vartheta,\xi}(B)\,P_{n,\xi}(\mathrm{d}\vartheta\mid\bm y_n),\qquad B\in\mathcal B(\R_+).
\end{equation*}
Since this mixture is, in general, not determined by posterior summaries such as the posterior mean or variance alone, approximating the full posterior law provides a direct way to compute quantities such as predictive quantiles and tail probabilities.

It is well known that, if the prior belongs to a family conjugate to the likelihood, then the posterior belongs to the same family, which allows for efficient inference and prediction.
We recall this elementary property in the following lemma.
\begin{lemma}\label{lem:conjugacy}
Fix $\xi$ and $n\in\mathbb N$, and let $\{\pi_\zeta:\zeta\in H\}$ be a family of probability measures on $\Theta$ such that $\pi_\xi=\pi_{\zeta_0}$ for some $\zeta_0\in H$.
Suppose that the observations are conditionally i.i.d.\ given $(\vartheta,\xi)$, with density $f_{\vartheta,\xi}$, and that, for every $\zeta\in H$ and every observation $y$,
\begin{equation}\label{eq:conjugacy}
f_{\vartheta,\xi}(y)\,\pi_\zeta(\mathrm{d}\vartheta)=\kappa_{\xi,\zeta}(y)\,\pi_{U_\xi(\zeta,y)}(\mathrm{d}\vartheta),
\end{equation}
where $\kappa_{\xi,\zeta}(y)\in(0,\infty)$ and $U_\xi(\zeta,y)\in H$.
Set $\zeta_j=U_\xi(\zeta_{j-1},y_j)$, $j=1,\ldots,n$.
Then
\begin{equation*}
P_{n,\xi}(\,\cdot\mid\bm y_n)=\pi_{\zeta_n},\qquad M_{n,\xi}(\bm y_n)=\prod_{j=1}^n\kappa_{\xi,\zeta_{j-1}}(y_j).
\end{equation*}
\end{lemma}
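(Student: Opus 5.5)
The plan is to prove, by induction on $m=0,1,\ldots,n$, the measure identity
\begin{equation*}
\prod_{j=1}^m f_{\vartheta,\xi}(y_j)\,\pi_{\zeta_0}(\mathrm{d}\vartheta)=\Bigl(\prod_{j=1}^m\kappa_{\xi,\zeta_{j-1}}(y_j)\Bigr)\,\pi_{\zeta_m}(\mathrm{d}\vartheta),
\end{equation*}
understood as equality of measures on $(\Theta,\mathcal B(\Theta))$. For $m=0$ both sides equal $\pi_{\zeta_0}=\pi_\xi$, with the empty product equal to one.

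For the inductive step, assume the identity holds for $m-1$. Multiply both sides by the nonnegative measurable function $\vartheta\mapsto f_{\vartheta,\xi}(y_m)$. Multiplying equal measures by the same nonnegative density gives equal measures, so
\begin{equation*}
\prod_{j=1}^m f_{\vartheta,\xi}(y_j)\,\pi_{\zeta_0}(\mathrm{d}\vartheta)=\Bigl(\prod_{j=1}^{m-1}\kappa_{\xi,\zeta_{j-1}}(y_j)\Bigr) f_{\vartheta,\xi}(y_m)\,\pi_{\zeta_{m-1}}(\mathrm{d}\vartheta).
\end{equation*}
Now apply \cref{eq:conjugacy} with $\zeta=\zeta_{m-1}\in H$ and $y=y_m$. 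This replaces $f_{\vartheta,\xi}(y_m)\,\pi_{\zeta_{m-1}}(\mathrm{d}\vartheta)$ by $\kappa_{\xi,\zeta_{m-1}}(y_m)\,\pi_{\zeta_m}(\mathrm{d}\vartheta)$, which closes the induction. Note that $\zeta_m\in H$ at each step because $U_\xi$ maps into $H$, so the hypothesis is available at every stage.

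At $m=n$, the conditional i.i.d.\ assumption identifies the left-hand side with $L_{n,\xi}(\vartheta;\bm y_n)\,\pi_\xi(\mathrm{d}\vartheta)$, the unnormalized posterior in \cref{eq:posterior-law}. The likelihood here is the product of the densities; any factor not depending on $\vartheta$ cancels in $P_{n,\xi}$ but must be tracked in $M_{n,\xi}$, so I take $L_{n,\xi}=\prod_j f_{\vartheta,\xi}(y_j)$ exactly.

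Integrating over $\Theta$ and using that $\pi_{\zeta_n}$ is a probability measure gives
\begin{equation*}
M_{n,\xi}(\bm y_n)=\prod_{j=1}^n\kappa_{\xi,\zeta_{j-1}}(y_j).
\end{equation*}
This is finite and strictly positive because each $\kappa\in(0,\infty)$, so the standing assumption $0<M_{n,\xi}<\infty$ holds automatically. Dividing by $M_{n,\xi}(\bm y_n)$ then yields $P_{n,\xi}(B\mid\bm y_n)=\pi_{\zeta_n}(B)$ for all $B\in\mathcal B(\Theta)$.

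The only delicate point is interpretive rather than computational. \Cref{eq:conjugacy} must be read as an identity of measures in $\vartheta$, for each fixed $y$, and it has to be applied pointwise at the realized observations $y_j$. One also needs $\vartheta\mapsto f_{\vartheta,\xi}(y)$ to be measurable so that the products are well-defined densities. I will state this reading explicitly at the start of the proof; the remaining steps are routine.
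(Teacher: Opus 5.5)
Your proposal is correct and is essentially the paper's argument: the paper iterates \cref{eq:conjugacy} along $\zeta_0,\zeta_1,\ldots,\zeta_n$, integrates over $\Theta$ using that $\pi_{\zeta_n}$ is a probability measure to obtain $M_{n,\xi}(\bm y_n)=\prod_{j=1}^n\kappa_{\xi,\zeta_{j-1}}(y_j)$, and then normalizes. Your induction on $m$ just makes that iteration explicit, and your side remarks are sound: reading \cref{eq:conjugacy} as an identity of measures in $\vartheta$, taking $L_{n,\xi}$ to be exactly the product of the densities, and noting that $0<M_{n,\xi}(\bm y_n)<\infty$ holds automatically.
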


\begin{proof}
Iterating the identity in \cref{eq:conjugacy} and integrating over $\Theta$ gives
\begin{equation*}
M_{n,\xi}(\bm y_n)=\int_{\Theta} L_{n,\xi}(\vartheta;\bm y_n)\,\pi_\xi(\mathrm{d}\vartheta)=\left\{\prod_{j=1}^n\kappa_{\xi,\zeta_{j-1}}(y_j)\right\}\int_{\Theta} \pi_{\zeta_n}(\mathrm{d}\vartheta)=\prod_{j=1}^n\kappa_{\xi,\zeta_{j-1}}(y_j), 
\end{equation*}
and normalization yields the posterior law.
\end{proof}
For the two benchmark models, taking $\pi_\xi=\mathrm{Gamma}(\alpha_\xi,\beta_\xi)$, $\alpha_\xi,\beta_\xi>0$, in shape--rate form and applying \cref{eq:conjugacy} gives the posterior laws 
\begin{equation*}
\Lambda\mid(\bm Y_n=\bm y_n,\xi)\sim\mathrm{Gamma}(\alpha_\xi+y_+,\beta_\xi+n),\qquad A\mid(\bm Y_n=\bm y_n,\xi)\sim\mathrm{Gamma}(\alpha_\xi+n,\beta_\xi+t_n).
\end{equation*}

However, not all cases are conjugate.
Suppose, for example, that the Gamma priors in the Poisson and Pareto models are replaced by lognormal priors,
\begin{equation*}
\log\Lambda\mid\xi\sim\mathcal N(\mu_{\Lambda,\xi},\sigma_{\Lambda,\xi}^2),\qquad \log A\mid\xi\sim\mathcal N(\mu_{A,\xi},\sigma_{A,\xi}^2),
\end{equation*}
where, for either prior, mean $m>0$ and coefficient of variation $v>0$ correspond to $\sigma^2=\log(1+v^2)$ and $\mu=\log m-\sigma^2/2$.
Putting $u=\log\lambda$ in the frequency model and $u=\log a$ in the severity model, the posterior densities given $(\bm Y_n=\bm y_n,\xi)$ with respect to Lebesgue measure on $\R$ are proportional, respectively, to
\begin{equation*}
\exp\!\left\{y_+u-ne^u-\frac{(u-\mu_{\Lambda,\xi})^2}{2\sigma_{\Lambda,\xi}^2}\right\},
\qquad
\exp\!\left\{nu-t_ne^u-\frac{(u-\mu_{A,\xi})^2}{2\sigma_{A,\xi}^2}\right\},
\qquad u\in\R.
\end{equation*}
In the frequency model, and in the severity model whenever $t_n>0$, the nonlinear $e^u$ terms prevent these densities from belonging to the Gaussian family.
Thus the lognormal prior is not conjugate in either model, and posterior quantities must in general be determined by numerical integration or simulation for each conditioning value.

In such cases, one possible approximation is provided by variational Bayes.
For fixed $(n,\bm y_n,\xi)$, let $\mathcal Q_\xi$ be a family of probability measures $q\ll\pi_\xi$ such that every $q\in\mathcal Q_\xi$ satisfies $L_{n,\xi}(\cdot;\bm y_n)>0$ $q$-almost everywhere, $\int_\Theta \abs{\log L_{n,\xi}(\vartheta;\bm y_n)}\,q(\mathrm{d}\vartheta)<\infty$, and $\operatorname{KL}(q\|\pi_\xi)<\infty$, where $\operatorname{KL}$ denotes the reverse Kullback--Leibler divergence.
Variational Bayes then selects $q\in\mathcal Q_\xi$ by maximizing the evidence lower bound (ELBO)
\begin{equation*}
\mathcal E_{n,\xi}(q;\bm y_n)=\int_\Theta \log L_{n,\xi}(\vartheta;\bm y_n)\,q(\mathrm{d}\vartheta)-\operatorname{KL}(q\|\pi_\xi).
\end{equation*}
By \cref{eq:posterior-law}, $\log M_{n,\xi}(\bm y_n)-\mathcal E_{n,\xi}(q;\bm y_n)=\operatorname{KL}\!\left(q\,\middle\|\,P_{n,\xi}(\,\cdot\mid\bm y_n)\right)$.
Hence maximizing the ELBO over $\mathcal Q_\xi$ is equivalent to minimizing the reverse Kullback--Leibler divergence to the posterior law $P_{n,\xi}(\,\cdot\mid\bm y_n)$.
This optimization must, however, still be repeated separately for each $(n,\bm y_n,\xi)$, whereas an amortized parametrization $q_\varphi(\,\cdot\mid n,\bm y_n,\xi)\in\mathcal Q_\xi$ shares $\varphi$ across conditioning values.

The dimension of the conditioning variable may be reduced whenever the sampling family admits a finite-dimensional sufficient statistic.
We recall this property in the following lemma:
\begin{lemma}
Fix $\xi$ and $n,r\in\mathbb N$.
Suppose that, conditional on $(\vartheta,\xi)$, $Y_1,\ldots,Y_n$ are i.i.d.\ and their common law has, with respect to a common dominating measure, a density
\begin{equation*}
f_{\vartheta,\xi}(y)=h_\xi(y)\exp\!\left\{\eta_\xi(\vartheta)^\top t_\xi(y)-b_\xi(\vartheta)\right\},
\end{equation*}
where $h_\xi:\mathsf Y\to[0,\infty)$ and $t_\xi:\mathsf Y\to\R^r$ are $\mathcal Y$-measurable, while $\eta_\xi:\Theta\to\R^r$ and $b_\xi:\Theta\to\R$ are $\mathcal B(\Theta)$-measurable. Suppose further that the support of $f_{\vartheta,\xi}$ does not depend on $\vartheta$, and define $S_n=\sum_{j=1}^n t_\xi(Y_j)$.
Whenever $0<M_{n,\xi}(\bm y_n)<\infty$, with $s_n=\sum_{j=1}^n t_\xi(y_j)$, the posterior satisfies
\begin{equation*}
P_{n,\xi}(B\mid\bm y_n)\propto\int_B\exp\!\left\{\eta_\xi(\vartheta)^\top s_n-nb_\xi(\vartheta)\right\}\,\pi_\xi(\mathrm{d}\vartheta),\qquad B\in\mathcal B(\Theta).
\end{equation*}
Consequently, for fixed $(n,\xi)$, both the posterior and posterior predictive laws depend on $\bm y_n$ only through $s_n$, regardless of whether $\pi_\xi$ is conjugate or not.
\end{lemma}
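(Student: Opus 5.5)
The plan is to write out the likelihood explicitly and apply Bayes' formula \cref{eq:posterior-law}; the statement is essentially the factorization theorem specialized to the posterior. By conditional independence given $(\vartheta,\xi)$, the likelihood of the history $\bm y_n$ is
\begin{equation*}
L_{n,\xi}(\vartheta;\bm y_n)=\prod_{j=1}^n f_{\vartheta,\xi}(y_j)=\left\{\prod_{j=1}^n h_\xi(y_j)\right\}\exp\!\left\{\eta_\xi(\vartheta)^\top s_n-nb_\xi(\vartheta)\right\}.
\end{equation*}
The first factor, $H_n(\bm y_n)=\prod_j h_\xi(y_j)$, does not involve $\vartheta$.

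The support assumption is used at this point. Because the support of $f_{\vartheta,\xi}$ is free of $\vartheta$, no indicator depending on $\vartheta$ is hidden in $h_\xi$, so the factorization holds for all $\vartheta$ simultaneously. The hypothesis $0<M_{n,\xi}(\bm y_n)<\infty$ then forces $H_n(\bm y_n)\in(0,\infty)$: if $H_n(\bm y_n)=0$, the marginal likelihood would vanish. It also forces $\int_\Theta\exp\{\eta_\xi^\top s_n-nb_\xi\}\,\mathrm d\pi_\xi=M_{n,\xi}(\bm y_n)/H_n(\bm y_n)\in(0,\infty)$.

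With this, $H_n(\bm y_n)$ cancels between numerator and normalizing constant in \cref{eq:posterior-law}, which gives the displayed proportionality. Measurability of $\vartheta\mapsto\exp\{\eta_\xi(\vartheta)^\top s_n-nb_\xi(\vartheta)\}$ follows from the assumed measurability of $\eta_\xi$ and $b_\xi$, so the integral over $B\in\mathcal B(\Theta)$ is well defined.

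For the consequence, the right-hand side depends on $\bm y_n$ only through $s_n$, and so does its normalization, which equals the integral over $\Theta$. Hence $P_{n,\xi}(\cdot\mid\bm y_n)$ is a function of $(n,\xi,s_n)$. Since $\Pi_{n,\xi}(B\mid\bm y_n)=\int K_{\vartheta,\xi}(B)\,P_{n,\xi}(\mathrm d\vartheta\mid\bm y_n)$ and the kernel $K_\xi$ does not involve $\bm y_n$, the predictive law inherits this dependence only through $s_n$. No conjugacy was used at any step.

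The only delicate point is the role of the common support, which guarantees that $H_n$ is a genuine $\vartheta$-free positive constant on the relevant histories. That is routine, so I expect no real obstacle.
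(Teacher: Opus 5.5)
Your proof is correct and follows the paper's argument. You factor the likelihood as $\prod_{j=1}^n h_\xi(y_j)\exp\{\eta_\xi(\vartheta)^\top s_n-nb_\xi(\vartheta)\}$, cancel the $\vartheta$-free factor in \eqref{eq:posterior-law}, and pass the resulting posterior through the mixture that defines the predictive law. You additionally check that $0<\prod_{j}h_\xi(y_j)<\infty$ follows from $0<M_{n,\xi}(\bm y_n)<\infty$, which the paper leaves implicit; note that the support hypothesis is actually automatic here, because the exponential factor is strictly positive. The paper, for its part, also cites the Fisher--Neyman criterion to record that $S_n$ is sufficient.
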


\begin{proof}
The likelihood factors as $\prod_{j=1}^n h_\xi(y_j)\exp\{\eta_\xi(\vartheta)^\top s_n-nb_\xi(\vartheta)\}$.
The Fisher--Neyman factorization criterion therefore shows that $S_n$ is sufficient for $\vartheta$ (for fixed $\xi$).
Substitution into Equation~\eqref{eq:posterior-law} and cancellation of the factor $\prod_{j=1}^n h_\xi(y_j)$ yield the displayed posterior law, which shows that the posterior depends on $\bm y_n$ only through $s_n$.
Because $K_{\vartheta,\xi}$ does not depend on $\bm y_n$, plugging the posterior into the mixture formula $\Pi_{n,\xi}(B\mid\bm y_n)=\int_\Theta K_{\vartheta,\xi}(B)\,P_{n,\xi}(\mathrm d\vartheta\mid\bm y_n)$ shows that the posterior predictive law also depends on $\bm y_n$ only through $s_n$.
\end{proof}

For the Poisson and Pareto models, $S_n=Y_+$ and $S_n=T_n$, respectively.
Hence the conditioning variable $(n,\bm y_n,\xi)$ may be replaced by $c=(n,y_+,\xi)$ in the frequency model and by $c=(n,t_n,\xi)$ in the severity model.
Conversely, the Pitman--Koopman--Darmois theorem implies, under suitable regularity assumptions, that an i.i.d.\ family with parameter-independent support admitting, for every sample size, a sufficient statistic whose dimension is bounded independently of $n$ is an exponential family; see, e.g., \citet{koopman1936distributions}.

\section{Conditional Wasserstein generative adversarial networks}\label{sec:cwgan}
Approximating a target law by a `generated' law requires choosing a metric with respect to which the distance between the two laws can be formulated and optimized.
In this work, we consider the 1-Wasserstein distance.
Let $d\in\mathbb N$, and set
\begin{equation*}
\mathcal P_1(\R^d)=\left\{\mu\in\mathcal P(\R^d)\, \colon\, \int_{\R^d}\norm{x}\,\mu(\mathrm{d}x)<\infty\right\},
\end{equation*}
where $\mathcal P(\R^d)$ denotes the space of Borel probability measures on $\R^d$ and $\norm{\cdot}$ is the Euclidean norm.
For $\mu,\nu\in\mathcal P_1(\R^d)$, let $\Pi(\mu,\nu)$ denote the set of Borel probability measures on $\R^d\times\R^d$ with first marginal $\mu$ and second marginal $\nu$, and define
\begin{equation}\label{eq:w1dist}
W_1(\mu,\nu)=\inf_{\gamma\in\Pi(\mu,\nu)}\int_{\R^d\times\R^d}\norm{x-y}\,\gamma(\mathrm{d}x,\mathrm{d}y).
\end{equation}
Indeed, \eqref{eq:w1dist} is finite: since $\mu\otimes\nu\in\Pi(\mu,\nu)$, the triangle inequality gives
\begin{equation*}
W_1(\mu,\nu)\leq\int_{\R^d\times\R^d}\norm{x-y}\,\mu(\mathrm{d}x)\,\nu(\mathrm{d}y)\leq\int_{\R^d}\norm{x}\,\mu(\mathrm{d}x)+\int_{\R^d}\norm{y}\,\nu(\mathrm{d}y)<\infty.
\end{equation*}
$W_1$ is a metric on $\mathcal P_1(\R^d)$, and $(\mathcal P_1(\R^d),W_1)$ is a Polish space \citep[Theorem~6.18]{villani2009optimal}.

Since the primal formulation in Equation~\eqref{eq:w1dist} is often difficult to compute, we turn to its dual representation.
Let $\operatorname{Lip}_1(\R^d)$ denote the class of real-valued 1-Lipschitz functions on $\R^d$.
For $\mu,\nu\in\mathcal P_1(\R^d)$, every $f\in\operatorname{Lip}_1(\R^d)$ is integrable with respect to both measures, since $\abs{f(x)}\leq \abs{f(0)}+\norm{x}$.
The Kantorovich--Rubinstein duality \citep[Particular Case~5.16]{villani2009optimal} yields the following representation of the 1-Wasserstein distance:
\begin{equation}\label{eq:w1dual}
W_1(\mu,\nu)=\sup_{f\in\operatorname{Lip}_1(\R^d)}\left(\int_{\R^d} f(x)\,\mu(\mathrm{d}x)-\int_{\R^d} f(x)\,\nu(\mathrm{d}x)\right).
\end{equation}
For later use, we note that the Kantorovich--Rubinstein representation remains valid on any Polish metric space, not just on $\R^d$.
Because $\operatorname{Lip}_1(\R^d)$ is closed under $f\mapsto-f$, the same value is obtained by placing an absolute value around the difference of integrals.
Moreover, because additive constants cancel, the supremum is unchanged if one imposes the normalization $f(0)=0$.
\cref{eq:w1dual} is particularly useful, as it replaces the infimum over the somewhat abstract couplings $\gamma \in \Pi(\mu, \nu)$ by a supremum over the tractable set of 1-Lipschitz `test' functions.

Wasserstein generative adversarial networks (WGANs, \citet{arjovsky2017wasserstein}) represent candidate laws as pushforwards of a fixed noise law and restrict the class $\operatorname{Lip}_1(\R^d)$ in the Kantorovich--Rubinstein representation in \cref{eq:w1dual} to parametrized neural critics.
Let $Z$ have a fixed probability law $\nu_Z$ on $\R^{d_z}$, and let the nonempty set $\mathsf G$ index measurable neural generator networks $G_\theta:\R^{d_z}\to\R^d$.
For a target law $\mu\in\mathcal P_1(\R^d)$, we write $\mu_\theta=(G_\theta)_\#\nu_Z$ for the pushforward of $\nu_Z$ under $G_\theta$ and assume $\mu_\theta\in\mathcal P_1(\R^d)$ for every $\theta\in\mathsf G$.
For a parametrized family of scalar-valued neural critics $D_\phi:\R^d\to\R$ containing at least one 1-Lipschitz critic, the WGAN objective is then given by
\begin{equation}\label{eq:w1nn}
\inf_{\theta\in\mathsf G}\sup_{\substack{\phi:\,D_\phi\in\operatorname{Lip}_1(\R^d)}}\Bigg(\int_{\R^d} D_\phi(x)\,\mu(\mathrm{d}x) -\int_{\R^{d_z}}D_\phi(G_\theta(z))\,\nu_Z(\mathrm{d}z)\Bigg).
\end{equation}
We note that, for fixed $\theta$, the inner supremum in \cref{eq:w1nn} is bounded above by $W_1(\mu,\mu_\theta)$ and need not be equal to it, since the critic class is restricted.
Likewise, the generator family restricts the attainable pushforward laws.
Since both $D_\phi$ and $G_\theta$ are parametrized by neural networks, we can use stochastic gradient descent to optimize the parameters $\theta$ and $\phi$ in an alternating fashion.
For completeness, we briefly recall the definition of shallow feedforward neural networks in the following definition.
\begin{definition}
Let $\psi \colon \R \to \R$ be a measurable activation function.
We denote by $\mathcal{NN}(\psi)$ the set of shallow feedforward neural networks with activation function $\psi$, which is the $\mathbb{R}$-linear span of functions of the form $x \mapsto \psi(\langle w, x \rangle + b)$, where the input dimension $q\in\mathbb{N}$ is implicitly understood from the context, and $w \in \mathbb{R}^q$ and $b \in \mathbb{R}$ are the weight vector and bias term, respectively.
In other words,
\begin{equation}
\mathcal{NN}(\psi) = \mathrm{span}\left\{ x \mapsto \psi(\langle w, x \rangle + b) \mid w \in \mathbb{R}^q, b \in \mathbb{R} \right\}.
\end{equation}
\end{definition}

Equation~\eqref{eq:w1dual} identifies the 1-Wasserstein distance as an integral probability metric \citep{mueller1997integral}.
Given a nonempty class $\mathcal F$ of real-valued Borel functions on $\R^d$, define, on any subclass of $\mathcal P(\R^d)$ for which the integrals and the following supremum are finite,
\begin{equation*}
d_{\mathcal F}(\mu,\nu)=\sup_{f\in\mathcal F}\left|\int_{\R^d} f(x)\,\mu(\mathrm{d}x)-\int_{\R^d} f(x)\,\nu(\mathrm{d}x)\right|.
\end{equation*}
Then $d_{\mathcal F}$ is a pseudometric and is a metric if $\mathcal F$ separates the probability measures under consideration.
In particular, taking $\mathcal F=\operatorname{Lip}_1(\R^d)$ recovers $W_1$ on $\mathcal P_1(\R^d)$.
With the convention $d_{\mathrm{TV}}(\mu,\nu)\coloneqq\sup_{A\in\mathcal B(\R^d)}\abs{\mu(A)-\nu(A)}$, taking $\mathcal F=\{\mathbf 1_A:A\in\mathcal B(\R^d)\}$ yields total variation distance, while, for $d=1$, the class $\mathcal F=\{\mathbf 1_{(-\infty,t]}:t\in\R\}$ yields the Kolmogorov distance.
If $\mathcal H$ denotes a reproducing kernel Hilbert space of functions on $\R^d$, then $\mathcal F=\{f\in\mathcal H:\norm{f}_{\mathcal H}\leq1\}$ yields the maximum mean discrepancy, whereas $\mathcal F_{\mathrm{BL}}=\{f\in\operatorname{Lip}_1(\R^d):\norm{f}_\infty\leq1\}$ yields the bounded-Lipschitz metric, which is finite on $\mathcal P(\R^d)$ and metrizes weak convergence.
Another particularly interesting and instructive case is the Cramér distance.
For $d=1$, let $C_{\mathrm{ac}}(\R)$ denote the class of absolutely continuous functions on $\R$.
Taking $\mathcal F=\{f\in C_{\mathrm{ac}}(\R):\norm{f'}_{L^2(\R)}\leq1\}$ yields, for $\mu,\nu\in\mathcal P_1(\R)$,
\begin{equation*}
d_{\mathcal F}(\mu,\nu)=\left(\int_{\R}\left(F_\mu(x)-F_\nu(x)\right)^2\,\mathrm{d}x\right)^{1/2},
\end{equation*}
where $F_\mu$ and $F_\nu$ denote the cumulative distribution functions of $\mu$ and $\nu$, respectively.
Consequently, the framework of GANs can be generalized to other integral probability metrics, where the discriminator network is trained to maximize a different objective function corresponding to a specific choice of $\mathcal F$.
By contrast, $f$-divergences do not, in general, arise as integral probability metrics from a choice of $\mathcal F$; we refer the interested reader to \citet{nowozin2016fgan}.

Under suitable assumptions on the activation function $\psi$, $\mathcal{NN}(\psi)$ is dense in $L^p(m)$ for every $1<p<\infty$ and every finite Borel measure $m$ on $\R^d$, and the restriction of $\mathcal{NN}(\psi)$ to any compact set $K\subset\R^d$ is dense in $C(K)$ under the uniform norm; see, e.g., \citet{hornik1991approximation}.
However, these classical density results do not, by themselves, control the approximation errors in~\eqref{eq:w1nn}.
To relate function approximation to approximation of generated laws, suppose that a measurable map $T:\R^{d_z}\to\R^d$ satisfies $T_\#\nu_Z=\mu$.
Then $T(Z)$ and $G_\theta(Z)$ are a coupling of $\mu$ and $\mu_\theta$, i.e., $(T,G_\theta)_\#\nu_Z\in\Pi(\mu,\mu_\theta)$, and hence 
\begin{equation*}
W_1(\mu,\mu_\theta)\leq\int_{\R^{d_z}}\norm{T(z)-G_\theta(z)}\,\nu_Z(\mathrm{d}z).
\end{equation*}
Hence, assuming that $T_\#\nu_Z=\mu$ for some measurable map $T$, $L^1(\nu_Z)$-approximation of $T$ is sufficient for $W_1$-approximation of $\mu$.
More generally, for WGANs, one must control both the generator error $\inf_{\theta\in\mathsf G}W_1(\mu,\mu_\theta)$ and the error incurred by restricting the supremum in~\eqref{eq:w1dual} to the neural critics $D_\phi\in\operatorname{Lip}_1(\R^d)$.

The $1$-Wasserstein criterion underlying the WGAN objective in Equation~\eqref{eq:w1nn} may be extended to the conditional setting as follows.
Let $(C,\mathcal C)$ be a measurable space of conditioning values containing the values $c=(n,s_n,\xi)$ identified in \cref{sec:problem_formulation}, let $\rho$ be their law, and assume that $c\mapsto P_c$ is a probability kernel from $(C,\mathcal C)$ to $(\Theta,\mathcal B(\Theta))$.
For $\theta\in\mathsf G$, let $G_\theta\colon C\times\R^{d_z}\to\Theta$ be a measurable conditional generator, and define $Q_{\theta,c}=\bigl(G_\theta(c,\cdot)\bigr)_\#\nu_Z$.
For each fixed $\theta\in\mathsf G$, the family $c\mapsto Q_{\theta,c}$ is a probability kernel, since, for every $B\in\mathcal B(\Theta)$,
\begin{equation*}
Q_{\theta,c}(B)=\int_{\R^{d_z}}\mathbf 1_B\!\left(G_\theta(c,z)\right)\nu_Z(\mathrm dz),
\end{equation*}
and the right-hand side is measurable in $c$ by the joint measurability of $G_\theta$.
As $\Theta\subseteq\R^d$, we regard $P_c$ and $Q_{\theta,c}$ as Borel probability measures on $\R^d$.
Assume, for every $\theta\in\mathsf G$, that $P_c,Q_{\theta,c}\in\mathcal P_1(\R^d)$ for every $c\in C$ and $\int_C\int_\Theta \norm{\vartheta}\left\{P_c(\mathrm d\vartheta)+Q_{\theta,c}(\mathrm d\vartheta)\right\}\rho(\mathrm dc)<\infty$.
The conditional Wasserstein objective is to minimize over $\theta\in\mathsf G$ the integrated conditional error
\begin{equation}\label{eq:integrated-w1}
\mathcal R(\theta)=\int_C W_1\!\left(P_c,Q_{\theta,c}\right)\rho(\mathrm dc).
\end{equation}
Let $\mathcal D_1$ be the class of jointly measurable functions $D\colon C\times\R^d\to\R$ such that, for all $c\in C$, $D(c, \cdot) \in \operatorname{Lip}_1(\R^d)$.
The proof of \cref{prop:conditional-kpack-identity}, specialized to $k=1$, yields the representation
\begin{equation}\label{eq:jointlaw}
\mathcal R(\theta)=\sup_{D\in\mathcal D_1}\int_{C\times\Theta\times\R^{d_z}}\left\{D(c,\vartheta)-D\!\left(c,G_\theta(c,z)\right)\right\}\rho(\mathrm dc)P_c(\mathrm d\vartheta)\nu_Z(\mathrm dz).
\end{equation}
Thus \cref{eq:jointlaw} replaces the integral of pointwise Wasserstein suprema $W_1(P_c,Q_{\theta,c})$ by a single supremum over jointly measurable critics $D\in\mathcal D_1$ and a single integral with respect to the joint law $\rho(\mathrm dc)P_c(\mathrm d\vartheta)\nu_Z(\mathrm dz)$, which is particularly convenient in hierarchical Bayesian models, where samples from $\rho(\mathrm dc)P_c(\mathrm d\vartheta)$ may be generated directly from the hierarchical model.
We note that the condition $D(c,\cdot)\in\operatorname{Lip}_1(\R^d)$ for every $c\in C$ does not by itself imply joint measurability of $(c,\vartheta)\mapsto D(c,\vartheta)$.
The proof of \cref{prop:conditional-kpack-identity} shows, however, that restricting the supremum in \cref{eq:jointlaw} to jointly measurable critics entails no loss.

While the formulation in \cref{eq:jointlaw} is convenient, its direct empirical implementation has a potential drawback.
In practice, the objective in \cref{eq:jointlaw} is estimated from independent samples $(C_i,\vartheta_i)$ with joint law $\rho(\mathrm{d}c)P_c(\mathrm{d}\vartheta)$.
When $\rho$ is non-atomic, the conditioning values in a finite batch are almost surely distinct.
Even when $\rho$ has atoms, repetitions may be too sparse for the critic to observe conditional variation at a fixed $c$ directly.
With a restricted neural critic class and imperfect optimization, the critic may therefore fail to detect loss of conditional variation in the generated law, and mode collapse may occur.
In the extreme case, $Q_{\theta,c}$ may be a Dirac measure for every $c\in C$.
Although Wasserstein GANs were introduced in part to mitigate mode collapse in classical GANs \citep{arjovsky2017wasserstein}, the phenomenon may nevertheless persist in the conditional setting.

A practical workaround to this issue may be obtained as follows.
For $\theta\in\mathsf G$, $c\in C$, and $k\in\mathbb N$, define the hybrid $k$-pack law on $(\R^d)^k$ by
\begin{equation}\label{eq:kpack}
H_{\theta,c}^{(k)}\coloneqq\frac{1}{k}\sum_{j=1}^kQ_{\theta,c}^{\otimes(j-1)}\otimes P_c\otimes Q_{\theta,c}^{\otimes(k-j)}.
\end{equation}
Here and below, tensor factors with exponent $0$ are omitted; in particular,
$H_{\theta,c}^{(1)}=P_c$.
Equivalently, one first chooses a coordinate uniformly from $\{1,\ldots,k\}$ and, conditional on this choice, draws the coordinates independently, with law $P_c$ at the selected coordinate and $Q_{\theta,c}$ at the remaining coordinates.
We adopt the term \emph{pack} from PacGAN \citep{lin2018pacgan}, where the discriminator acts jointly on several independent samples from a common law.
By contrast, the hybrid pack in \cref{eq:kpack} consists of one draw from $P_c$ and $k-1$ draws from $Q_{\theta,c}$.
The special case $k=2$ was proposed by \citet{adler2025deep}; however, the construction in \cref{eq:kpack} is more general and allows for arbitrary $k\geq1$.

For $k\geq2$, a critic distinguishing between draws from $H_{\theta,c}^{(k)}$ and $Q_{\theta,c}^{\otimes k}$ rather than from $P_c$ and $Q_{\theta,c}$ directly is evaluated jointly on several draws corresponding to the same conditioning value $c$.
However, a priori, training with hybrid $k$-packs may not be equivalent to training with the original conditional Wasserstein objective in \cref{eq:integrated-w1}.
\cref{prop:conditional-kpack-identity} below shows that the two objectives differ only by the factor $1/k$ (which is an artifact of the choice of metric on the product space) and, consequently, have the same minimizing sequences and, whenever minimizers exist, the same minimizers.
In order to see this, for every $k\geq1$, we equip $(\R^d)^k$ with the norm $\norm{\boldsymbol\vartheta}_k=\frac1k\sum_{i=1}^k\norm{\vartheta_i}$.
For notational simplicity, we suppress the dependence of $H_{\theta,c}^{(k)}$ and $Q_{\theta,c}^{\otimes k}$ on $\theta$ and write $H_c^{(k)}$ and $Q_c^{\otimes k}$ instead.
Moreover, we convene that, when applied to probability measures on $(\R^d)^k$, $W_1$ denotes the $1$-Wasserstein distance induced by $\norm{\cdot}_k$.
Let $\mathcal D_k$ be the class of all $\mathcal C\otimes\mathcal B((\R^d)^k)$-measurable functions $D:C\times(\R^d)^k\to\R$ such that, for every $c\in C$, $D(c,\cdot)\in\operatorname{Lip}_1((\R^d)^k)$ with respect to the norm $\norm{\cdot}_k$.
We define
\begin{equation*}
\mathcal J_k(\theta)=\sup_{D\in\mathcal D_k}\int_C\left\{\int_{(\R^d)^k}D(c,\boldsymbol\vartheta)\,H_c^{(k)}(\mathrm d\boldsymbol\vartheta)-\int_{(\R^d)^k}D(c,\boldsymbol\vartheta)\,Q_c^{\otimes k}(\mathrm d\boldsymbol\vartheta)\right\}\rho(\mathrm dc).
\end{equation*}

\begin{proposition}\label{prop:conditional-kpack-identity}
For every $\theta\in\mathsf G$ and every integer $k \geq 1$,
\begin{equation*}
\mathcal J_k(\theta)=\int_C W_1\!\left(H_c^{(k)},Q_c^{\otimes k}\right)\rho(\mathrm dc)=\frac1k\mathcal R(\theta).
\end{equation*}
\end{proposition}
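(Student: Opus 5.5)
Two equalities need proof. The first interchanges the supremum over jointly measurable critics with the $\rho$-integral. The second is a pointwise identity $W_1(H_c^{(k)},Q_c^{\otimes k})=\frac1k W_1(P_c,Q_c)$ for each fixed $c$. I would prove the pointwise identity first, since it is elementary, and then handle measurability.

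\textbf{Pointwise identity.} Fix $c$ and write $P=P_c$, $Q=Q_{\theta,c}$.

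For the upper bound, take an optimal coupling $\gamma\in\Pi(P,Q)$; it exists because $(\R^d,\norm{\cdot})$ is Polish and $P,Q\in\mathcal P_1$. Build a coupling of $H^{(k)}$ and $Q^{\otimes k}$ as follows. Draw $J$ uniform on $\{1,\ldots,k\}$, draw $(X,Y)\sim\gamma$, and draw $Z_i\sim Q$ i.i.d., all independent. Put $X$ in coordinate $J$ of the first vector and $Y$ in coordinate $J$ of the second. Fill the remaining coordinates of both vectors with the common $Z_i$. The first vector has law $H^{(k)}$ and the second has law $Q^{\otimes k}$. Their $\norm{\cdot}_k$-distance is $\frac1k\norm{X-Y}$, so $W_1(H^{(k)},Q^{\otimes k})\le \frac1k W_1(P,Q)$.

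For the lower bound, use Kantorovich--Rubinstein duality on $((\R^d)^k,\norm{\cdot}_k)$. Let $f\in\operatorname{Lip}_1(\R^d)$ be near-optimal for $(P,Q)$. The map $F(\boldsymbol\vartheta)=\frac1k\sum_i f(\vartheta_i)$ is $1$-Lipschitz for $\norm{\cdot}_k$. Moreover,
\[
\int F\,dH^{(k)}-\int F\,dQ^{\otimes k}=\frac1k\Bigl(\int f\,dP-\int f\,dQ\Bigr),
\]
because in each mixture component only the coordinate carrying $P$ differs. This gives equality.

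\textbf{Integrated identity.}
\begin{itemize}
\item The bound $\mathcal J_k\le\int_C W_1(H_c^{(k)},Q_c^{\otimes k})\,\rho(\mathrm dc)$ is immediate. For each $D\in\mathcal D_k$, the inner difference is at most the pointwise $W_1$ by the duality bound.
\item One must also check that $c\mapsto W_1(P_c,Q_{\theta,c})$ is measurable. I would argue via the Polish structure of $(\mathcal P_1,W_1)$: $c\mapsto P_c$ and $c\mapsto Q_{\theta,c}$ are measurable into $\mathcal P_1(\R^d)$ with its Borel $\sigma$-field (generated by evaluation maps), and $W_1$ is continuous.
\item For the reverse inequality I would construct measurable near-optimal critics. $\operatorname{Lip}_1(\R^d)$ functions with $f(0)=0$ form a compact metrizable space under locally uniform convergence (Arzelà--Ascoli). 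Choose a countable dense family $\{f_m\}$. By dominated convergence, using the bound $\abs{f(x)}\le\norm{x}$, we get $W_1(P_c,Q_c)=\sup_m\bigl(\int f_m\,dP_c-\int f_m\,dQ_c\bigr)$ for every $c$.
\item Given $\varepsilon>0$, define $m(c)$ as the least $m$ achieving the supremum within $\varepsilon$. This index is measurable in $c$. Set $D(c,\vartheta)=f_{m(c)}(\vartheta)$, which is jointly measurable as a countable measurable patching.
\item Integrating and using the moment assumption (so everything is $\rho$-integrable) gives $\mathcal J_1\ge\mathcal R-\varepsilon$. For general $k$, apply the same construction with $F_m(\boldsymbol\vartheta)=\frac1k\sum_i f_m(\vartheta_i)$. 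Alternatively, note directly that $\mathcal J_k\ge\frac1k\mathcal J_1$ via the map $D\mapsto\frac1k\sum_i D(c,\vartheta_i)$.
\end{itemize}

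Combining these gives $\mathcal J_k=\frac1k\mathcal R$, and the $k=1$ case is \cref{eq:jointlaw}. The main obstacle is this measurable-selection step: the countable dense family and the measurability of $c\mapsto W_1(P_c,Q_{\theta,c})$. The coupling and duality computations are routine.
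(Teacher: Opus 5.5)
Your argument is correct. The pointwise identity is handled essentially as in the paper. Your random-index coupling with shared $Z_i$ is the paper's average of the $k$ couplings, and the lower bound uses the same critic $\frac1k\sum_i f(\vartheta_i)$. Invoking an optimal $\gamma$ instead of taking the infimum over couplings is harmless.

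You genuinely diverge on the integrated identity.

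\textbf{The paper's route.} It works directly on $\bigl((\R^d)^k,\norm{\cdot}_k\bigr)$. It makes $\operatorname{Lip}_1((\R^d)^k)$ separable through the metric of pointwise convergence on a countable dense set, which embeds it into $\R^{\mathbb N}$. It then proves the countable-supremum representation of $W_1$ on the product space and performs the measurable selection there. The product structure enters only through the integrability bound via the pointwise identity. The argument would otherwise go through for any two kernels on the product space with integrable first moments.

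\textbf{Your route.} You build a countable dense family only in the normalized $\operatorname{Lip}_1(\R^d)$, via Arzel\`a--Ascoli, and do the selection at $k=1$. You then lift via $D\mapsto\frac1k\sum_i D(c,\vartheta_i)$ and close the chain $\frac1k\mathcal R\le\frac1k\mathcal J_1\le\mathcal J_k\le\int_C W_1(H_c^{(k)},Q_c^{\otimes k})\,\rho(\mathrm dc)=\frac1k\mathcal R$ with the pointwise identity. This is shorter. It also shows that $\varepsilon$-optimal critics for $\mathcal J_k$ can be taken of the additive form $\frac1k\sum_i f(c,\vartheta_i)$. The cost is that it depends on the specific structure of $H_c^{(k)}$ and $Q_c^{\otimes k}$.

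\textbf{A simplification.} You can drop your separate measurability argument through the Borel structure of $(\mathcal P_1,W_1)$. The claim that this $\sigma$-field is generated by evaluation maps itself needs something like the countable-supremum representation to prove. Once you have $W_1(P_c,Q_c)=\sup_m\{\int f_m\,\mathrm dP_c-\int f_m\,\mathrm dQ_c\}$, with each term measurable in $c$ by the kernel property, measurability of $c\mapsto W_1(P_c,Q_c)$ is immediate. That is exactly how the paper proceeds.
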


\begin{proof}
In order to see that, for fixed $c\in C$, both $Q_c^{\otimes k}$ and $H_c^{(k)}$ have finite first moment with respect to $\norm{\cdot}_k$, set $m_P(c)=\int_\Theta \norm{\vartheta}P_c(\mathrm d\vartheta)$ and $m_Q(c)=\int_\Theta \norm{\vartheta}Q_c(\mathrm d\vartheta)$, and note that $m_P(c)$ and $m_Q(c)$ are finite by assumption.
Moreover, for every $c\in C$,
\begin{equation*}
\int_{(\R^d)^k}\norm{\boldsymbol\vartheta}_k H_c^{(k)}(\mathrm d\boldsymbol\vartheta)=\frac1k m_P(c)+\left(1-\frac1k\right)m_Q(c),
\quad
\int_{(\R^d)^k}\norm{\boldsymbol\vartheta}_k Q_c^{\otimes k}(\mathrm d\boldsymbol\vartheta)=m_Q(c).
\end{equation*}

We first show the pointwise identity
\begin{equation}\label{eq:pointwise-kpack-identity}
W_1(H_c^{(k)},Q_c^{\otimes k})=\frac1k W_1(P_c,Q_c),\qquad c\in C.
\end{equation}
Fix $c\in C$, and abbreviate $P=P_c$, $Q=Q_c$, and $H=H_c^{(k)}$.
Let $\gamma\in\Pi(P,Q)$, and fix $j\in\{1,\ldots,k\}$.
Let $(U,V)$ have law $\gamma$, and, independently of $(U,V)$, let $X_i\sim Q$, $i\neq j$, be mutually independent.
The joint law of $(\boldsymbol X^{(j)},\boldsymbol Y^{(j)})$, where
\begin{equation*}
\boldsymbol X^{(j)} =(X_1,\ldots,X_{j-1},U,X_{j+1},\ldots,X_k),\qquad \boldsymbol Y^{(j)}=(X_1,\ldots,X_{j-1},V,X_{j+1},\ldots,X_k),
\end{equation*}
is a coupling of $Q^{\otimes(j-1)}\otimes P\otimes Q^{\otimes(k-j)}$ and $Q^{\otimes k}$.
Averaging these $k$ couplings gives a coupling $\overline{\Gamma}\in\Pi(H,Q^{\otimes k})$ with expected cost
\begin{equation*}
\int_{(\R^d)^k\times(\R^d)^k}\norm{\boldsymbol\vartheta-\boldsymbol\vartheta'}_k\,\overline{\Gamma}(\mathrm d\boldsymbol\vartheta,\mathrm d\boldsymbol\vartheta')=\frac1k\int_{\R^d\times\R^d}\norm{\vartheta-\vartheta'}\,\gamma(\mathrm d\vartheta,\mathrm d\vartheta').
\end{equation*}
Taking the infimum over $\gamma\in\Pi(P,Q)$ yields $W_1(H,Q^{\otimes k})\leq\frac1k W_1(P,Q)$.

Conversely, for $f\in\operatorname{Lip}_1(\R^d)$, define $F(\boldsymbol\vartheta)=\frac1k\sum_{i=1}^k f(\vartheta_i)$.
Then $F$ is $1$-Lipschitz with respect to $\norm{\cdot}_k$.
Applying \cref{eq:w1dual} on $\bigl((\R^d)^k,\norm{\cdot}_k\bigr)$ yields
\begin{align*}
W_1(H,Q^{\otimes k})=\sup_{G\in\operatorname{Lip}_1((\R^d)^k)}&\left\{\int_{(\R^d)^k}G\,\mathrm dH-\int_{(\R^d)^k}G\,\mathrm dQ^{\otimes k}\right\}\\ \geq{}&\int_{(\R^d)^k}F\,\mathrm dH-\int_{(\R^d)^k}F\,\mathrm dQ^{\otimes k}={}\frac1k\left\{\int_{\R^d} f\,\mathrm dP-\int_{\R^d} f\,\mathrm dQ\right\}.
\end{align*}
Taking the supremum of the last expression over all $f\in\operatorname{Lip}_1(\R^d)$ and applying \cref{eq:w1dual} gives the reverse inequality and proves~\eqref{eq:pointwise-kpack-identity}.

Write $\mathscr L_k = \operatorname{Lip}_1((\R^d)^k)$.
Since $\bigl((\R^d)^k,\norm{\cdot}_k\bigr)$ is separable, there exists a countable dense set $\{\boldsymbol x_n:n\geq1\}\subset(\R^d)^k$, and we define
\begin{equation*}
r(F,G)=\sum_{n=1}^{\infty}2^{-n}\left(|F(\boldsymbol x_n)-G(\boldsymbol x_n)|\wedge1\right),\qquad F,G\in\mathscr L_k.
\end{equation*}
Lipschitz functions are continuous and are therefore determined by their values on a dense set, hence $r$ is a metric on $\mathscr L_k$.
It follows that the map $T:F\mapsto(F(\boldsymbol x_n))_{n\geq1}$ is injective and identifies $(\mathscr L_k,r)$ with a subspace $T(\mathscr L_k)$ of $\mathbb R^{\mathbb N}$ equipped with the topology of coordinatewise convergence (since $r$ metrizes coordinatewise convergence on $\mathscr L_k$, i.e. $r(F_j,F)\to 0 \iff F_j(\boldsymbol x_n)\to F(\boldsymbol x_n)$ for every $n$).
Since $\mathbb R^{\mathbb N}$ is second countable in the product topology (being a countable product of second-countable spaces), so is this subspace, and hence $(\mathscr L_k,r)$ is separable.
Thus there exists an $r$-dense sequence $(F_m)_{m\geq1}\subset\mathscr L_k$.
The Kantorovich--Rubinstein duality (\cref{eq:w1dual}) gives
\begin{equation*}
\sup_{m\geq1}\left\{\int_{(\R^d)^k}F_m\,\mathrm d\mu-\int_{(\R^d)^k}F_m\,\mathrm d\nu\right\}\leq W_1(\mu,\nu),\qquad\mu,\nu\in\mathcal P_1((\R^d)^k).
\end{equation*}

To prove the converse inequality, fix $F\in\mathscr L_k$.
By density, there is a sequence $(F_{m_\ell})_{\ell\geq1}$ selected from $(F_m)_{m\geq1}$ such that $r(F_{m_\ell},F)\to0$.
Thus $F_{m_\ell}(\boldsymbol x_n)\to F(\boldsymbol x_n)$ for every $n$.
For arbitrary $\boldsymbol\vartheta\in(\R^d)^k$ and every $n$, the common Lipschitz bound gives
\begin{align*}
|F_{m_\ell}(\boldsymbol\vartheta)-F(\boldsymbol\vartheta)|&\leq|F_{m_\ell}(\boldsymbol\vartheta)-F_{m_\ell}(\boldsymbol x_n)|+|F_{m_\ell}(\boldsymbol x_n)-F(\boldsymbol x_n)|+|F(\boldsymbol x_n)-F(\boldsymbol\vartheta)|\\
&\leq 2\norm{\boldsymbol\vartheta-\boldsymbol x_n}_k+|F_{m_\ell}(\boldsymbol x_n)-F(\boldsymbol x_n)|.
\end{align*}
Choosing first $\boldsymbol x_n$ close to $\boldsymbol\vartheta$ and then letting $\ell\to\infty$ shows that $F_{m_\ell}(\boldsymbol\vartheta)\to F(\boldsymbol\vartheta)$.
Since $F_{m_\ell}(0)\to F(0)$, the sequence $(F_{m_\ell}(0))_{\ell\geq1}$ is bounded.
Hence there exists $C<\infty$ such that, using the Lipschitz property, $|F_{m_\ell}(\boldsymbol\vartheta)|\leq C+\norm{\boldsymbol\vartheta}_k$, for all $\ell\geq1$.
Therefore, for any $\mu,\nu\in\mathcal P_1((\R^d)^k)$, dominated convergence yields
\begin{align*}
\int_{(\R^d)^k}F\,\mathrm d\mu-\int_{(\R^d)^k}F\,\mathrm d\nu &=\lim_{\ell\to\infty}\left\{\int_{(\R^d)^k}F_{m_\ell}\,\mathrm d\mu-\int_{(\R^d)^k}F_{m_\ell}\,\mathrm d\nu\right\}\\
&\leq\sup_{m\geq1}\left\{\int_{(\R^d)^k}F_m\,\mathrm d\mu-\int_{(\R^d)^k}F_m\,\mathrm d\nu\right\}.
\end{align*}
Taking the supremum over $F\in\mathscr L_k$ and applying the Kantorovich--Rubinstein duality yields the reverse inequality, and therefore
\begin{equation*}
W_1(\mu,\nu)=\sup_{m\geq1}\left\{\int_{(\R^d)^k}F_m\,\mathrm d\mu-\int_{(\R^d)^k}F_m\,\mathrm d\nu\right\},\qquad\mu,\nu\in\mathcal P_1((\R^d)^k).
\end{equation*}

Set
\begin{equation*}
a_m(c)=\int_{(\R^d)^k}F_m\,\mathrm dH_c^{(k)}-\int_{(\R^d)^k}F_m\,\mathrm dQ_c^{\otimes k},\qquad w(c)=\sup_{m\geq1}a_m(c).
\end{equation*}
Since $F_m$ is Borel measurable and $H_c^{(k)}$ and $Q_c^{\otimes k}$ are probability kernels, $a_m$ is $\mathcal C$-measurable for every $m$, and therefore $w$ is $\mathcal C$-measurable.
The preceding arguments imply that $w(c)=W_1(H_c^{(k)},Q_c^{\otimes k})$, and~\eqref{eq:pointwise-kpack-identity} yields $0\leq w(c)=k^{-1}W_1(P_c,Q_c)\leq k^{-1}\{m_P(c)+m_Q(c)\}$, so $w\in L^1(\rho)$, and the Kantorovich--Rubinstein duality implies, for every $D\in\mathcal D_k$, that the absolute value of its pointwise difference of expectations is bounded by $w(c)$, hence $\mathcal J_k(\theta)\leq\int_Cw\,\mathrm d\rho$.
For the converse inequality, define $m_\varepsilon(c)=\min\{m\geq1:a_m(c)>w(c)-\varepsilon\}$ for $\varepsilon>0$, and note that $m_\varepsilon:C\to\mathbb N$ is measurable, and that $D_\varepsilon(c,\boldsymbol\vartheta)=F_{m_\varepsilon(c)}(\boldsymbol\vartheta)$ belongs to $\mathcal D_k$.
Therefore
\begin{equation*}
\mathcal J_k(\theta)\geq\int_C a_{m_\varepsilon(c)}(c)\,\rho(\mathrm dc)\geq\int_Cw(c)\,\rho(\mathrm dc)-\varepsilon,
\end{equation*}
and letting $\varepsilon\downarrow0$ proves the first equality for $\mathcal J_k(\theta)$.
Integrating~\eqref{eq:pointwise-kpack-identity} with respect to $\rho$ then gives the second equality.
\end{proof}

\begin{remark}\label{rem:permutation-invariant-kpack-critics}
By \cref{eq:kpack}, permuting the pack coordinates only changes the order of the $k$ terms in $H_c^{(k)}$.
The product law $Q_c^{\otimes k}$ is also unchanged.
Thus both laws are exchangeable for every $c\in C$.
Hence the value of $\mathcal J_k(\theta)$ does not change if we restrict $\mathcal D_k$ to critics that are permutation-invariant.
Let $\mathfrak S_k$ be the symmetric group.
For $D\in\mathcal D_k$, define
\begin{equation*}
\tau_\sigma(\boldsymbol\vartheta)=(\vartheta_{\sigma(1)},\ldots,\vartheta_{\sigma(k)}),\qquad \overline D(c,\boldsymbol\vartheta)=\frac{1}{k!}\sum_{\sigma\in\mathfrak S_k}D\bigl(c,\tau_\sigma(\boldsymbol\vartheta)\bigr).
\end{equation*}
Each $\tau_\sigma$ is an isometry of $\bigl((\R^d)^k,\norm{\cdot}_k\bigr)$.
It follows that $\overline D$ is jointly measurable and $1$-Lipschitz with respect to $\norm{\cdot}_k$, and it is permutation invariant by construction.
Thus $\overline D\in\mathcal D_k$.
For $\mu_c=H_c^{(k)}$ or $\mu_c=Q_c^{\otimes k}$, exchangeability gives
\begin{align*}
\int_{(\R^d)^k}\overline D(c,\boldsymbol\vartheta)\,\mu_c(\mathrm d\boldsymbol\vartheta)=\frac{1}{k!}\sum_{\sigma\in\mathfrak S_k}\int_{(\R^d)^k}D\bigl(c,\tau_\sigma(\boldsymbol\vartheta)\bigr)\,\mu_c(\mathrm d\boldsymbol\vartheta)=\int_{(\R^d)^k}D(c,\boldsymbol\vartheta)\,\mu_c(\mathrm d\boldsymbol\vartheta).
\end{align*}
Permutation invariance may, for example, be imposed by means of a DeepSets architecture \citep{zaheer2017deep}.
\end{remark}

Before we conclude this section, we present a general stability result for posterior predictive laws.
To this end, let $K$ be a probability kernel from $(C\times\Theta,\mathcal C\otimes\mathcal B(\Theta))$ to $(\R_+,\mathcal B(\R_+))$, write $K_{c,\vartheta}=K((c,\vartheta),\cdot)$, and assume that $K_{c,\vartheta}\in\mathcal P_1(\R_+)$ for every $(c,\vartheta)$.
With $m(c,\vartheta)=\int_{\R_+}s\,K_{c,\vartheta}(\mathrm ds)$, suppose that $\int_\Theta m(c,\vartheta)\{P_c(\mathrm d\vartheta)+Q_c(\mathrm d\vartheta)\}<\infty$ for every $c\in C$.
Define the true and generated posterior-predictive kernels by
\begin{equation*}
\Pi_c^P(B)=\int_\Theta K_{c,\vartheta}(B)\,P_c(\mathrm d\vartheta),\qquad \Pi_c^Q(B)=\int_\Theta K_{c,\vartheta}(B)\,Q_c(\mathrm d\vartheta), \qquad B\in\mathcal B(\R_+).
\end{equation*}
Let $a:C\times\Theta\times\Theta\to[0,\infty]$ be $\mathcal C\otimes\mathcal B(\Theta)\otimes\mathcal B(\Theta)$-measurable, set $a_c(\vartheta,\vartheta')=a(c,\vartheta,\vartheta')$, and define the extended transport cost
\begin{equation*}
W_{a_c}(P_c,Q_c)=\inf_{\gamma\in\Pi(P_c,Q_c)}\int_{\Theta\times\Theta}a_c(\vartheta,\vartheta')\,\gamma(\mathrm d\vartheta,\mathrm d\vartheta').
\end{equation*}

\begin{theorem}\label{thm:integrated-predictive-stability}
Assume that, for $\rho$-almost every $c$, $W_1(K_{c,\vartheta},K_{c,\vartheta'})\leq a_c(\vartheta,\vartheta')$ for all $\vartheta,\vartheta'\in\Theta$,
and $a_c(\vartheta,\vartheta')=L(c)\norm{\vartheta-\vartheta'}$ for a measurable $L:C\to[0,\infty)$. 
Then, for $\rho$-almost every $c$,
\begin{equation}\label{eq:pointwise-predictive-wasserstein-bound}
W_1(\Pi_c^P,\Pi_c^Q)\leq L(c)W_1(P_c,Q_c).
\end{equation}
\end{theorem}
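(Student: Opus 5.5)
Fix $c$ in the $\rho$-full set where the Lipschitz hypothesis holds. The plan is to use the Kantorovich--Rubinstein duality \eqref{eq:w1dual} on $\R_+\subset\R$, test against an arbitrary $g\in\operatorname{Lip}_1(\R)$, and push the test function through the kernel. Define $\bar g_c(\vartheta)=\int_{\R_+}g(s)\,K_{c,\vartheta}(\mathrm ds)$. Since $\abs{g(s)}\le\abs{g(0)}+s$ and $K_{c,\vartheta}\in\mathcal P_1(\R_+)$, this integral is finite. It is also $\mathcal B(\Theta)$-measurable in $\vartheta$ because $K$ is a kernel: approximate $g$ by truncations $(g\wedge N)\vee(-N)$ and use monotone or dominated convergence. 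Fubini, justified by $\int_\Theta m(c,\vartheta)\{P_c+Q_c\}(\mathrm d\vartheta)<\infty$, then gives
\begin{equation*}
\int g\,\mathrm d\Pi_c^P-\int g\,\mathrm d\Pi_c^Q=\int_\Theta \bar g_c\,\mathrm dP_c-\int_\Theta\bar g_c\,\mathrm dQ_c .
\end{equation*}

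The key step is to show that $\bar g_c$ is $L(c)$-Lipschitz on $\Theta$. For $\vartheta,\vartheta'\in\Theta$, the duality applied to $K_{c,\vartheta},K_{c,\vartheta'}$ gives
\begin{equation*}
\abs{\bar g_c(\vartheta)-\bar g_c(\vartheta')}\le W_1(K_{c,\vartheta},K_{c,\vartheta'})\le a_c(\vartheta,\vartheta')=L(c)\norm{\vartheta-\vartheta'}.
\end{equation*}
If $L(c)=0$, then $\bar g_c$ is constant, both sides of the claimed bound vanish, and we are done. Otherwise $\bar g_c/L(c)$ is $1$-Lipschitz on $\Theta$. We extend it to all of $\R^d$ by McShane's formula $\inf_{\vartheta\in\Theta}\{\bar g_c(\vartheta)/L(c)+\norm{x-\vartheta}\}$, which is finite and $1$-Lipschitz and agrees with $\bar g_c/L(c)$ on $\Theta$. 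Because $P_c$ and $Q_c$ are supported in $\Theta$, the duality \eqref{eq:w1dual} on $\R^d$ then bounds the right-hand side above by $L(c)W_1(P_c,Q_c)$.

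Taking the supremum over $g\in\operatorname{Lip}_1(\R)$ yields \eqref{eq:pointwise-predictive-wasserstein-bound}. Before applying duality we must check that $\Pi_c^P,\Pi_c^Q\in\mathcal P_1(\R_+)$; their first moments are $\int m(c,\vartheta)P_c(\mathrm d\vartheta)$ and $\int m(c,\vartheta)Q_c(\mathrm d\vartheta)$, both finite by assumption.

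An alternative is a coupling argument: take an optimal $\gamma\in\Pi(P_c,Q_c)$ and glue it with near-optimal couplings of $K_{c,\vartheta},K_{c,\vartheta'}$. That route requires a measurable selection of couplings, so I prefer the dual route. There the only delicate points are the measurability and integrability of $\bar g_c$ and the Lipschitz extension off $\Theta$, and I expect the measurability and integrability of $\bar g_c$ to be the main technical obstacle.
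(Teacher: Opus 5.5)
Your proof is correct and follows the paper's argument in its essentials: apply Kantorovich--Rubinstein duality on the predictive side, push the test function through the kernel to obtain $\bar g_c$, and use the duality for $K_{c,\vartheta}$ to show that $\bar g_c$ is $a_c$-Lipschitz. The only difference is the final step: instead of extending $\bar g_c$ off $\Theta$ (your McShane step) and dualizing on $\R^d$, the paper integrates $|\bar g_c(\vartheta)-\bar g_c(\vartheta')|\le a_c(\vartheta,\vartheta')$ against an arbitrary, not necessarily optimal, coupling $\gamma\in\Pi(P_c,Q_c)$ and takes the infimum, which needs no extension and no measurable selection, and gives the more general bound $W_1(\Pi_c^P,\Pi_c^Q)\le W_{a_c}(P_c,Q_c)$ for any measurable cost $a_c$.
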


\begin{example}[Poisson frequency]\label{ex:poisson-frequency}
Let $\Theta=(0,\infty)$, let $\tau:C\to(0,\infty)$ be measurable, and define $K_{c,\lambda}=\operatorname{Poisson}(\tau(c)\lambda)$.
Fix $c\in C$ and suppose first that $\lambda'\geq\lambda$.
Let $N\sim\operatorname{Poisson}(\tau(c)\lambda)$ and $M\sim\operatorname{Poisson}(\tau(c)(\lambda'-\lambda))$ be independent, and set $N'=N+M$.
Then $N'\sim\operatorname{Poisson}(\tau(c)\lambda')$, so this coupling yields (recall \cref{eq:w1dist})
\begin{equation*}
W_1(K_{c,\lambda},K_{c,\lambda'})\leq\E[|N'-N|]=\E[M]=\tau(c)(\lambda'-\lambda).
\end{equation*}
Since the identity map on $\R_+$ is $1$-Lipschitz, the dual representation gives
\begin{equation*}
W_1(K_{c,\lambda},K_{c,\lambda'})\geq\left|\E[N']-\E[N]\right|=\tau(c)(\lambda'-\lambda).
\end{equation*}
Exchanging $\lambda$ and $\lambda'$ covers the reverse ordering. Hence $W_1(K_{c,\lambda},K_{c,\lambda'})=\tau(c)|\lambda-\lambda'|$, and \cref{thm:integrated-predictive-stability} applies with $a_c(\lambda,\lambda')=\tau(c)|\lambda-\lambda'|$.
\end{example}

\begin{example}[Pareto severity]\label{ex:pareto-severity}
Let $\Theta\subseteq(1,\infty)$, let $u:C\to(0,\infty)$ be measurable, and let $K_{c,\alpha}$ be the Pareto law with threshold $u(c)$ and shape parameter $\alpha$, whose quantile function is $F_{c,\alpha}^{-1}(p)=u(c)(1-p)^{-1/\alpha}$.
In one dimension, the Wasserstein distance may be expressed in terms of quantile functions.
Since $F_{c,\alpha}^{-1}(p)$ is decreasing in $\alpha$ for every $p\in(0,1)$, direct calculation shows that, for $\alpha,\alpha'\in\Theta$,
\begin{align*}
W_1(K_{c,\alpha},K_{c,\alpha'})
&=\int_0^1\left|F_{c,\alpha}^{-1}(p)-F_{c,\alpha'}^{-1}(p)\right|\,\mathrm dp 
=\left|\int_0^1 F_{c,\alpha}^{-1}(p)\,\mathrm dp-\int_0^1 F_{c,\alpha'}^{-1}(p)\,\mathrm dp\right| \\
&=u(c)\left|\frac{\alpha}{\alpha-1}-\frac{\alpha'}{\alpha'-1}\right|
=u(c)\left|\frac{1}{\alpha-1}-\frac{1}{\alpha'-1}\right|.
\end{align*}
Thus, the bound assumed in \cref{thm:integrated-predictive-stability} holds with
\begin{equation*}
a_c(\alpha,\alpha')=u(c)\left|(\alpha-1)^{-1}-(\alpha'-1)^{-1}\right|.
\end{equation*}
If $\Theta\subseteq[1+\varepsilon,\infty)$ for some $\varepsilon>0$, then $a_c(\alpha,\alpha')\leq u(c)\varepsilon^{-2}|\alpha-\alpha'|$, and \cref{thm:integrated-predictive-stability} applies with $L(c)=u(c)\varepsilon^{-2}$.
\end{example}

\begin{proof}[Proof of \cref{thm:integrated-predictive-stability}]
First, we note that $c\mapsto\Pi_c^P$ and $c\mapsto\Pi_c^Q$ are probability kernels.
Their first moments are, respectively,
\begin{equation*}
\int_{\R_+}s\,\Pi_c^P(\mathrm ds)=\int_\Theta m(c,\vartheta)\,P_c(\mathrm d\vartheta),\qquad\int_{\R_+}s\,\Pi_c^Q(\mathrm ds)=\int_\Theta m(c,\vartheta)\,Q_c(\mathrm d\vartheta).
\end{equation*}
Both quantities are finite by assumption, and hence $\Pi_c^P,\Pi_c^Q\in\mathcal P_1(\R_+)$ for every $c$.
Let $c\in C$ be fixed and such that $W_1(K_{c,\vartheta},K_{c,\vartheta'})\leq a_c(\vartheta,\vartheta')$ for all $\vartheta,\vartheta'\in\Theta$.
Let $\varphi\in\operatorname{Lip}_1(\R_+)$.
The function
\begin{equation*}
h_c(\vartheta)=\int_{\R_+}\varphi(s)\,K_{c,\vartheta}(\mathrm ds)
\end{equation*}
is measurable and integrable under both $P_c$ and $Q_c$, because $|h_c(\vartheta)|\leq m(c,\vartheta) + |\varphi(0)|$.
Kantorovich--Rubinstein duality gives
\begin{equation*}
|h_c(\vartheta)-h_c(\vartheta')|\leq W_1(K_{c,\vartheta},K_{c,\vartheta'})\leq a_c(\vartheta,\vartheta').
\end{equation*}
Thus, for every $\gamma\in\Pi(P_c,Q_c)$,
\begin{equation*}
\left|\int_{\R_+}\varphi\,\mathrm d\Pi_c^P-\int_{\R_+}\varphi\,\mathrm d\Pi_c^Q\right|=\left|\int_{\Theta\times\Theta}\{h_c(\vartheta)-h_c(\vartheta')\}\,\gamma(\mathrm d\vartheta,\mathrm d\vartheta')\right|\leq\int_{\Theta\times\Theta}a_c(\vartheta,\vartheta')\,\gamma(\mathrm d\vartheta,\mathrm d\vartheta').
\end{equation*}
Taking the infimum over $\gamma\in\Pi(P_c,Q_c)$ and then the supremum over all $\varphi\in\operatorname{Lip}_1(\R_+)$ gives
\begin{equation*}
W_1(\Pi_c^P,\Pi_c^Q)\leq W_{a_c}(P_c,Q_c)=L(c)W_1(P_c,Q_c),
\end{equation*}
which proves \eqref{eq:pointwise-predictive-wasserstein-bound}.
\end{proof}

\section{Simulation study}\label{sec:simulation_study}
The preceding results lead directly to the numerical question of how accurately the learned conditional law approximates the posterior law.
In this section, we therefore assess the cWGAN approximation for the Poisson frequency and Pareto severity models, using a single shared generator under mixtures of three prior families.
We assess the performance of the cWGAN generator through calibration diagnostics and comparisons with appropriate analytical or numerical posterior references.
The trained generator is then applied to EM-DAT data to obtain rolling one-year posterior-predictive distributions of aggregate losses from extreme natural catastrophes.
We begin with a description of the natural catastrophe dataset.

We use event-level EM-DAT records for five peril classes: drought, earthquake, flood, storm, and wildfire, which were queried from the EM-DAT database on June 24, 2026.
The Emergency Events Database (EM-DAT, \citet{delforge2025emdat}) records the occurrence and impacts of disasters worldwide from 1900 to the present day.
We restrict to events whose recorded start and end years coincide and lie in 2000--2025 and for which damages are reported.
The equality of the start and end years ensures that every retained event can be attributed to a single year.
We use the reported inflation-adjusted total damage and express all damage values in billions of US dollars for the subsequent numerical analysis.

Figure~\ref{fig:catastrophe_model_calibration} and Table~\ref{tab:emdat-frequency-damage} summarize the dataset.
We fit a power law to the upper tail of the empirical survival function, resulting in a threshold $x_{\min}$ of 2.50 billion USD.
Henceforth, we call an event \emph{extreme} if its adjusted total damage exceeds $x_{\min}$, and retain the resulting 370 events for the subsequent analysis.
As summarized in Table~\ref{tab:emdat-frequency-damage}, storms account for approximately $51\%$ of both events and aggregate damage.
Earthquakes are less frequent but more devastating, contributing $11\%$ of events but $21\%$ of damage, whereas the corresponding shares for floods are $25\%$ and $18\%$.
We then recalibrate the prior moments on expanding windows ending in each year from 2009 through 2025, under the Bühlmann model for frequency and the Bühlmann--Straub model for severity.
The bottom panels in Figure~\ref{fig:catastrophe_model_calibration} summarize the calibrated prior mean and coefficient of variation over the years.

\begin{figure}[tbp]
\centering
\includegraphics[width=.9\linewidth]{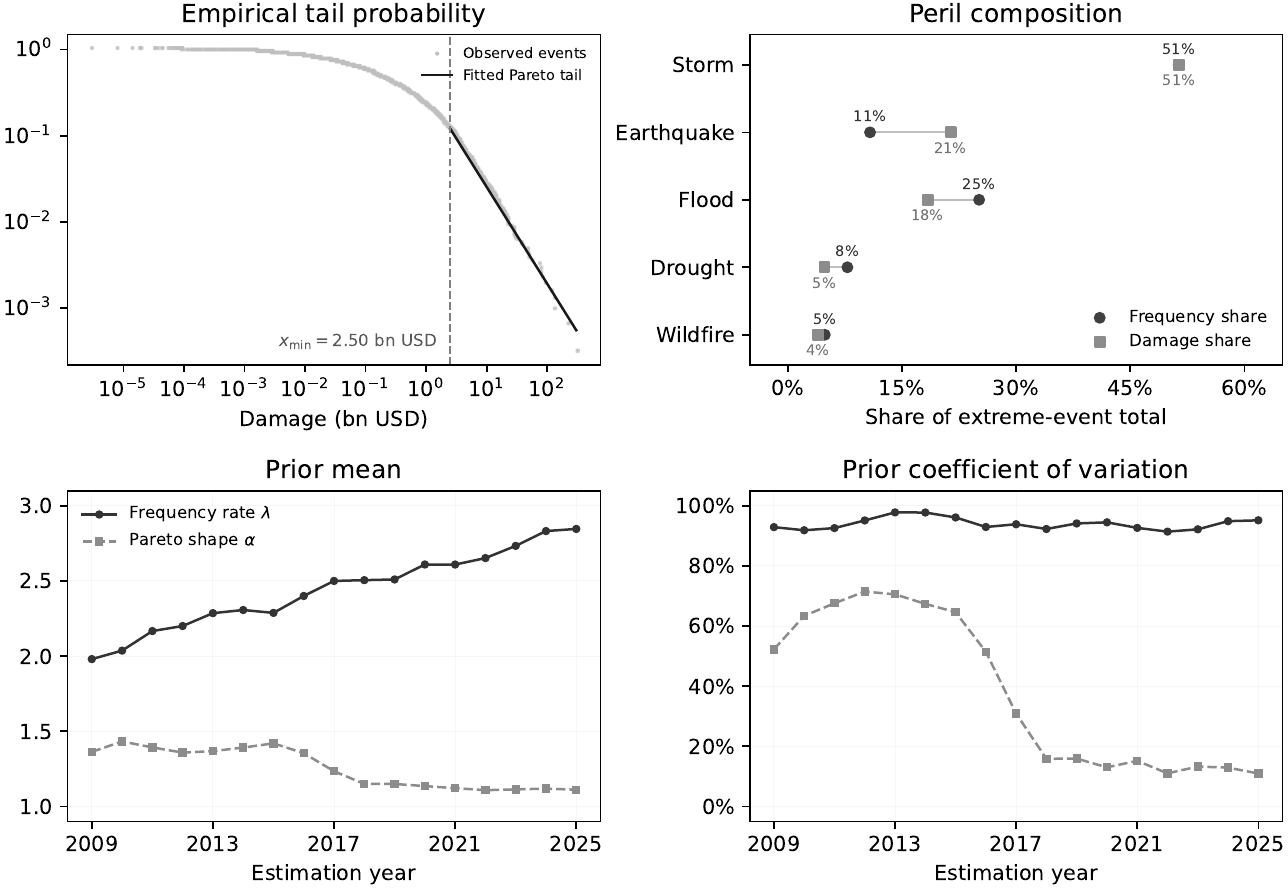}
\caption{Calibration of the catastrophe model.}
\label{fig:catastrophe_model_calibration}
\end{figure}

\begin{table}[tbp]
\centering
\begin{threeparttable}
\begin{tabular}{lrrrrrrrrrr}
\toprule
 & \multicolumn{5}{c}{Frequency} & \multicolumn{5}{c}{Damage (USD billions)} \\
Year & DR & EQ & FL & ST & WF & DR & EQ & FL & ST & WF \\
\midrule
2000 & - & - & 3 & 3 & 1 & - & - & 39.9 & 16.6 & 2.8 \\
2001 & - & 3 & - & 2 & - & - & 11.1 & - & 15.9 & - \\
2002 & 2 & - & 4 & 4 & - & 9.5 & - & 34.9 & 18.3 & - \\
2003 & - & 1 & 4 & 5 & 2 & - & 8.8 & 29.2 & 33.2 & 9.2 \\
2004 & - & 2 & 3 & 9 & - & - & 55.3 & 11.3 & 122.6 & - \\
2005 & - & 1 & 4 & 9 & 2 & - & 8.6 & 16.3 & 282.7 & 6.1 \\
2006 & 1 & 1 & 1 & 4 & - & 4.7 & 5.0 & 5.4 & 15.9 & - \\
2007 & - & 1 & 4 & 5 & 2 & - & 19.4 & 23.9 & 23.8 & 6.6 \\
2008 & - & 1 & 2 & 4 & 1 & - & 127.1 & 18.2 & 64.4 & 3.0 \\
2009 & - & 2 & 1 & 5 & - & - & 7.0 & 3.2 & 17.0 & - \\
2010 & - & 3 & 3 & 6 & 1 & - & 65.7 & 45.1 & 25.4 & 2.7 \\
2011 & 1 & 3 & 5 & 9 & - & 11.4 & 326.3 & 28.3 & 67.3 & - \\
2012 & 1 & 1 & 3 & 8 & - & 28.0 & 22.2 & 18.2 & 101.8 & - \\
2013 & - & 1 & 7 & 9 & - & - & 9.4 & 48.7 & 51.4 & - \\
2014 & 2 & 1 & 3 & 7 & - & 6.4 & 6.8 & 27.3 & 38.7 & - \\
2015 & 2 & 1 & 3 & 4 & - & 9.5 & 7.0 & 9.4 & 15.2 & - \\
2016 & 1 & 4 & 8 & 7 & 1 & 4.0 & 41.5 & 64.7 & 34.2 & 5.4 \\
2017 & 2 & 2 & 2 & 13 & 2 & 6.3 & 10.9 & 11.9 & 327.3 & 20.0 \\
2018 & 2 & 1 & 2 & 6 & 2 & 8.2 & 4.2 & 15.8 & 70.1 & 27.8 \\
2019 & - & - & 6 & 7 & - & - & - & 41.8 & 59.9 & - \\
2020 & 2 & 2 & 6 & 12 & 1 & 9.3 & 16.2 & 51.2 & 80.5 & 13.7 \\
2021 & 3 & 1 & 3 & 5 & 1 & 17.9 & 9.2 & 70.8 & 126.3 & 3.6 \\
2022 & 3 & 2 & 6 & 7 & - & 37.0 & 12.2 & 42.4 & 128.4 & - \\
2023 & 3 & 3 & 2 & 14 & 1 & 20.8 & 52.7 & 13.7 & 92.2 & 5.8 \\
2024 & 2 & 2 & 5 & 17 & - & 11.8 & 18.0 & 38.2 & 180.9 & - \\
2025 & 2 & 1 & 3 & 9 & 1 & 7.9 & 11.0 & 24.4 & 45.8 & 53.0 \\
\midrule
\textbf{Total} & 29 & 40 & 93 & 190 & 18 & 192.7 & 855.5 & 734.4 & 2055.7 & 159.5 \\
\bottomrule
\end{tabular}
\caption{Annual frequency and aggregate damage of extreme natural catastrophes by peril, 2000--2025. Damage is measured in billions of USD. DR denotes drought, EQ earthquake, FL flood, ST storm, and WF wildfire; a dash denotes zero.}
\label{tab:emdat-frequency-damage}
\end{threeparttable}
\end{table}

For both frequency and severity, the likelihood depends on $\vartheta$ through a term of the form $\vartheta^r\exp(-s\vartheta)$, which permits a single generator to be trained for both tasks.
We write the conditioning variable as $c=(m,v,r,s,w)$, where $m$ and $v$ are the prior mean and coefficient of variation and $w=(w_\Gamma,w_{\mathrm{IG}},w_{\mathrm{LN}})$ is the vector of prior-family mixture weights.
For $m>0$, $v>0$, and $w$ in the probability simplex, let $\pi_{w,m,v}$ be the $w$-mixture of Gamma, inverse-Gaussian, and lognormal laws having common mean $m$ and coefficient of variation $v$.
Then $\pi_{w,m,v}$ has mean $m$ and variance $m^2v^2$ for every $w$, so changing $w$ does not change the prior mean or variance.
In either task, the posterior law is
\begin{equation*}
P_c(\mathrm d\vartheta)\propto\vartheta^r\exp(-s\vartheta)\,\pi_{w,m,v}(\mathrm d\vartheta),\qquad (r,s)=
\begin{cases}
(y_+,n), & \text{for frequency},\\
(n,t_n), & \text{for severity}.
\end{cases}
\end{equation*}
We take $\nu_Z=\mathcal N(0,I_2)$.
The generator has two hidden layers of 64 units each with Sigmoid Linear Unit (SiLU) activation and an exponential output transformation to preserve positivity.
The critic has two hidden layers of 64 units each with Leaky Rectified Linear Unit (Leaky ReLU) activation of negative slope $0.2$ and is applied to hybrid $3$-packs corresponding to a common conditioning value.
We use learning rates $3\times10^{-5}$ and $10^{-4}$ for the generator and critic, respectively (cf.\ \citealp{heusel2017gans}), and train the cWGAN for $2\times10^6$ iterations on batches of size 8,192.
The critic is regularized by the one-sided gradient penalty specified in Algorithm~\ref{alg:shared-cwgan-training}, adapted to the hybrid packs, which is a mild version of the penalty meant to enforce the Lipschitz property of the critic \citep{gulrajani2017improved}.
The prior mean $m$ and coefficient of variation $v$ are sampled uniformly from a range that covers all the calibrated values in Figure~\ref{fig:catastrophe_model_calibration}.
The mixture weights are sampled as follows: 30\% are placed at the vertices; 30\% are placed on the edges, where the two nonzero weights are $U$ and $1-U$ with $U\sim\operatorname{Uniform}(0.05,0.95)$; and 40\% are drawn from a $\operatorname{Dirichlet}(1,1,1)$ distribution conditional on every weight being at least $0.05$.
Each minibatch of 8,192 samples contains equal numbers of frequency and severity cases, with $n$ sampled uniformly from $\{1,\ldots,30\}$ and $\{1,\ldots,200\}$, respectively.
Conditional on the sampled $(m,v,w,n)$, we draw $\vartheta\sim\pi_{w,m,v}$ and simulate $(r,s)=(R,n)$ with $R\sim\operatorname{Poisson}(n\vartheta)$ for frequency and $(r,s)=(n,S)$ with $S\sim\operatorname{Gamma}(n,\vartheta)$ for severity.
Thus the conditional law of $\vartheta$ at $c=(m,v,r,s,w)$ is $P_c$.
Algorithm~\ref{alg:shared-cwgan-training} summarizes the training procedure.

\begin{algorithm}[H]
\caption{Training the shared cWGAN with hybrid $k$-packs.}
\label{alg:shared-cwgan-training}
\small
\begin{algorithmic}[1]
\REQUIRE $G_\theta,D_\phi,T,B,n_{\mathrm{crit}},k,\eta_G,\eta_D,
\lambda_\mathrm{GP},u_0,\beta_{\mathrm{EMA}}$.
\STATE Initialize $\theta$ and $\phi$; set $\bar\theta\gets\theta$ and $u\gets0$.
\FOR{$t=1,\ldots,T$}
    \FOR{$\ell=1,\ldots,n_{\mathrm{crit}}$}
        \STATE Draw a batch $\{(c_i,\vartheta_i)\}_{i=1}^{B}$.
        \STATE Draw $Z_{ij}\sim\nu_Z$, $j=1,\ldots, k$, and set
        \begin{equation*}
        \widehat{\boldsymbol\vartheta}_i\gets\bigl(G_\theta(c_i,Z_{i1}),\ldots,G_\theta(c_i,Z_{ik})\bigr).
        \end{equation*}
        \STATE For $j=1,\ldots,k$, let $\boldsymbol\vartheta_i^{(j)}$ be obtained by replacing coordinate $j$ of $\widehat{\boldsymbol\vartheta}_i$ by $\vartheta_i$.
        \STATE Compute
        \begin{equation*}
        \widehat{\mathcal J}\gets\frac{1}{Bk} \sum_{i=1}^{B}\sum_{j=1}^{k} D_\phi(c_i,\boldsymbol\vartheta_i^{(j)})-\frac{1}{B}\sum_{i=1}^{B}D_\phi(c_i,\widehat{\boldsymbol\vartheta}_i).
        \end{equation*}
        \STATE Draw $\varepsilon_i\sim\operatorname{Uniform}(0,1)$ and set
        \begin{equation*}
        \widetilde{\boldsymbol\vartheta}_i^{(j)}\gets \varepsilon_i\boldsymbol\vartheta_i^{(j)}+(1-\varepsilon_i)\widehat{\boldsymbol\vartheta}_i.
        \end{equation*}
        \STATE Compute
        \begin{equation*}
        g_i \gets \frac{1}{k}\sum_{j=1}^{k}\nabla_{\boldsymbol x} D_\phi(c_i,\widetilde{\boldsymbol\vartheta}_i^{(j)}), \qquad \widehat{\mathcal P}\gets\frac{1}{B}\sum_{i=1}^{B}\bigl(\lVert g_i\rVert_2-1\bigr)_+^2.
        \end{equation*}
        \STATE Update $\phi$ by an Adam step of size $\eta_D$ minimizing $-\widehat{\mathcal J}+\lambda_\mathrm{GP}\widehat{\mathcal P}$.
    \ENDFOR
    \STATE Draw a batch, compute $\widehat{\mathcal J}$, and update $\theta$ by an Adam step of size $\eta_G$ minimizing $\widehat{\mathcal J}$ with $\phi$ fixed.
    \STATE Set $u\gets u+1$; if $u=u_0$, set $\bar\theta\gets\theta$; if $u>u_0$, set $\bar\theta\gets \beta_{\mathrm{EMA}}\bar\theta+ (1-\beta_{\mathrm{EMA}})\theta$.
\ENDFOR
\STATE \textbf{Return} $G_{\bar\theta}$.
\end{algorithmic}
\end{algorithm}

We take $n_{\mathrm{crit}}=1$, $\lambda_{\mathrm{GP}}=10$, and Adam coefficients $(\beta_1,\beta_2)=(0,0.9)$.
An exponential moving average (EMA) of the generator parameters is initialized after $u_0=50{,}000$ generator updates and thereafter updated with $\beta_{\mathrm{EMA}}=2^{-1/25{,}000}$, and the averaged generator $G_{\bar\theta}$ is used below.

The cWGAN aims to learn the posterior laws of the frequency and severity parameters under the prior mixture.
Since analytical posterior references are not available throughout the prior-mixture simplex, we first assess $G_{\bar\theta}$ by simulation-based calibration (SBC; \citealp{talts2018validating}).
In simplified notation, suppressing the additional conditioning variables, let $\vartheta\sim\pi$, let $Y\sim p(\,\cdot\mid\vartheta)$, and, conditionally on $Y$, draw $\widetilde{\vartheta}_1,\ldots,\widetilde{\vartheta}_L$ independently from an approximate posterior $Q(\,\cdot\mid Y)$ and independently of $\vartheta$.
Let $f$ be a real-valued measurable test quantity, possibly depending on $Y$, and suppose that the conditional law of $f(Y,\vartheta)$ given $Y$ is atomless.
If $Q(\,\cdot\mid Y)=\mathcal L(\vartheta\mid Y)$, then $\vartheta,\widetilde{\vartheta}_1,\ldots,\widetilde{\vartheta}_L$ are conditionally i.i.d.\ given $Y$, and hence
\begin{equation*}
R=\sum_{\ell=1}^{L}\mathbf{1}_{\{f(Y,\widetilde{\vartheta}_{\ell})<f(Y,\vartheta)\}}
\end{equation*}
is uniformly distributed on $\{0,\ldots,L\}$.

Using 50,000 prior-predictive replications per task, with $L=1023$ approximate posterior draws in each replication, the first column of Figure~\ref{fig:diagnostics} reports the resulting SBC diagnostics.
For each task, we plot the difference between the empirical cumulative distribution function (ECDF) of $\widetilde R=R/L$ and the distribution function of the discrete uniform law on $\{0,1/L,\ldots,1\}$.
For parameter ranks, obtained by taking $f(Y,\vartheta)=\vartheta$, the supremum ECDF deviations are $0.24\%$ for frequency and $0.50\%$ for severity, and both curves remain within the corresponding $95\%$ simultaneous null bands \citep{sailynoja2022graphical}.
The choice of test quantity is nevertheless material.
As shown by \citet{modrak2025simulation}, parameter ranks alone need not detect even the approximation $Q(\,\cdot\mid Y)=\pi$, which ignores the data, whereas data-dependent test quantities such as the joint likelihood can provide sensitivity to such failures.
For the joint log-likelihood, $f(Y,\vartheta)=\log p(Y\mid\vartheta)$, the supremum ECDF deviations are $0.27\%$ and $0.37\%$, respectively, and again both curves remain within the corresponding $95\%$ simultaneous null bands.
For 1,000 draws of $c$ per task, the signed errors in posterior mean, standard deviation, the $95\%$ quantile, and the $95\%$ superquantile relative to deterministic quadrature, normalized by $m$, are centered near zero: all interquartile ranges are contained in $[-1\%,1\%]$, and all central $95\%$ ranges are contained in $[-4\%,4\%]$.
The $W_1$ errors, likewise normalized by $m$, are similarly small: medians are roughly $0.2\%$--$0.5\%$, and the $97.5\%$ quantiles remain below $2.5\%$ across the Gamma, inverse-Gaussian, and lognormal vertices, the edges, and the simplex interior.

\begin{figure}[tbp]
\centering
\includegraphics[width=.9\linewidth]{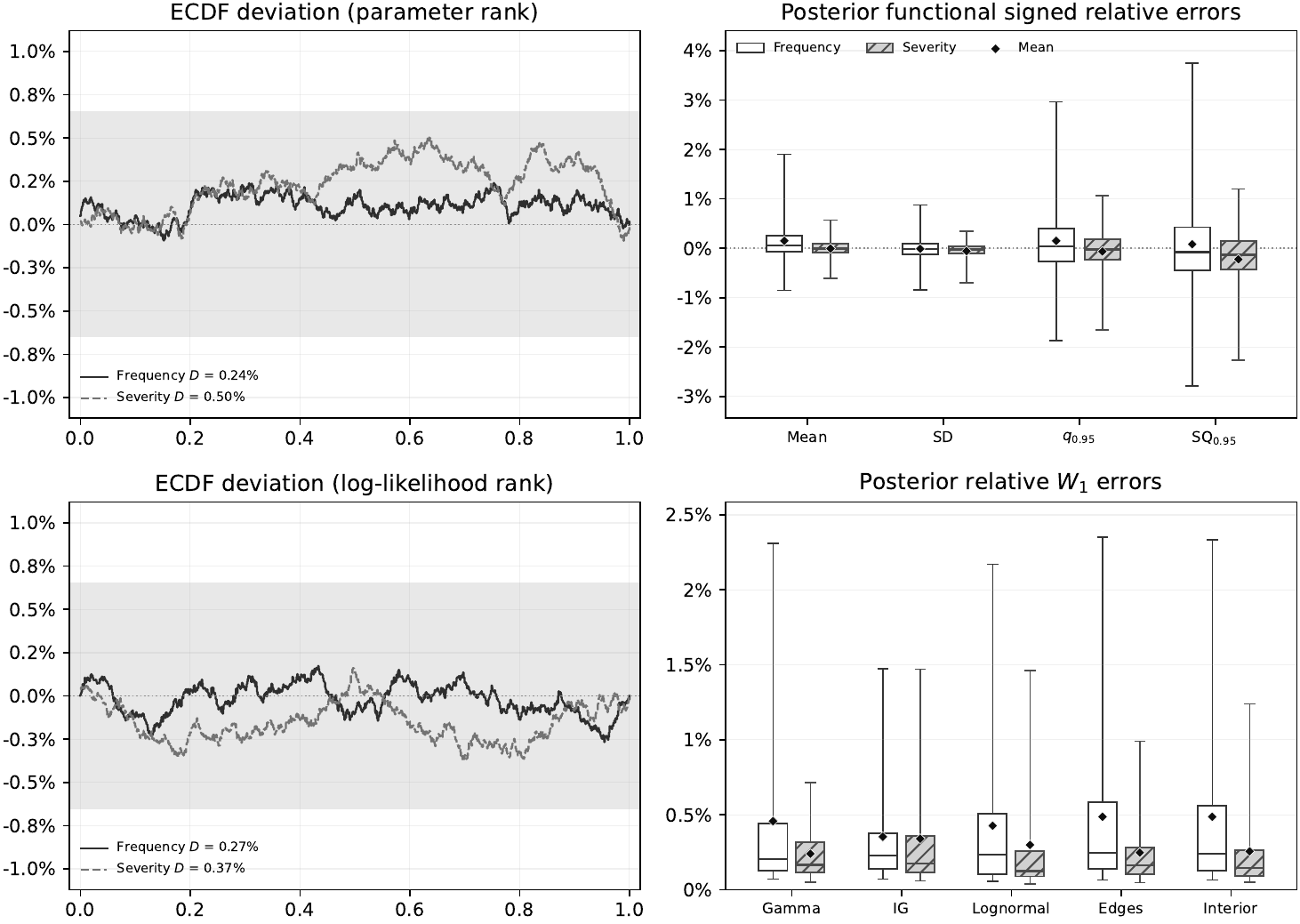}
\caption{Calibration and approximation diagnostics for the shared cWGAN. The figure reports simulation-based calibration for the frequency and severity posterior laws, relative errors in posterior functionals against benchmark references, and relative $1$-Wasserstein errors across the vertices, edges, and interior of the prior-mixture simplex.}
\label{fig:diagnostics}
\end{figure}

In the catastrophe application, the mixture weights $W^F$ and $W^S$ are inferred separately for frequency and severity from independent $\operatorname{Dirichlet}(1,1,1)$ priors.
Each predictive replicate uses one draw of each vector, common to all five perils.
For $x\in\{F,S\}$, write $D^x=(D_1^x,\ldots,D_5^x)$ and assume that, conditional on the corresponding mixture weights $w$, the peril-specific datasets $D_1^x,\ldots,D_5^x$ are independent.
Let $L_j^x(D_j^x\mid\vartheta)$ be the likelihood for peril $j$, with $\vartheta=\lambda$ for $x=F$ and $\vartheta=a$ for $x=S$, and let $\pi_k^x$ denote the common prior law in family $k\in\{\Gamma,\mathrm{IG},\mathrm{LN}\}$.
Then
\begin{equation*}
Z_{jk}^x\coloneqq\int_0^\infty L_j^x(D_j^x\mid\vartheta)\,\pi_k^x(\mathrm{d}\vartheta),
\qquad
p_x(w\mid D^x)\propto\prod_{j=1}^{5}\left(\sum_{k=1}^{3}w_k Z_{jk}^x\right),
\quad
w_k\geq0,\qquad\sum_{k=1}^{3}w_k=1.
\end{equation*}
Thus $Z_{jk}^x$ is the marginal likelihood of $D_j^x$ under $\pi_k^x$, and $\sum_k w_kZ_{jk}^x$ is the corresponding marginal likelihood under mixture weights $w$.
After collecting like terms, the product is a nonnegative linear combination of monomials $w_\Gamma^{n_\Gamma}w_{\mathrm{IG}}^{n_{\mathrm{IG}}}w_{\mathrm{LN}}^{n_{\mathrm{LN}}}$ with $n_\Gamma+n_{\mathrm{IG}}+n_{\mathrm{LN}}=5$.
Hence $p_x(\,\cdot\mid D^x)$ is a mixture of at most $\binom{7}{2}=21$ Dirichlet laws with parameters $(n_\Gamma+1,n_{\mathrm{IG}}+1,n_{\mathrm{LN}}+1)$, from which $W^F$ and $W^S$ are sampled independently.

Figure~\ref{fig:catastrophe_posterior_simplex} reports the rolling one-year posterior-predictive aggregate losses.
Conditional on the threshold $x_{\min}=2.50$ billion USD selected from the 2000--2025 data, each forecast for 2010--2025 uses only information available through the preceding year.
Across these sixteen rolls, the predictive median changes gradually, with an overall upward drift, and every realized aggregate loss lies within the corresponding central $95\%$ predictive interval.
At the 2025 cutoff, the cWGAN forecast for 2026 has a median of $128.2$ billion USD and $95\%$ and $97.5\%$ quantiles of $582.1$ and $985.6$ billion USD, respectively.
A numerical reference, obtained from an extensive MCMC simulation, yields $127.8$, $580.8$, and $986.4$ billion USD, respectively, and the two distributions remain closely aligned over their central mass and through the displayed heavy upper tail.

\begin{figure}[tbp]
\centering
\includegraphics[width=.9\linewidth]{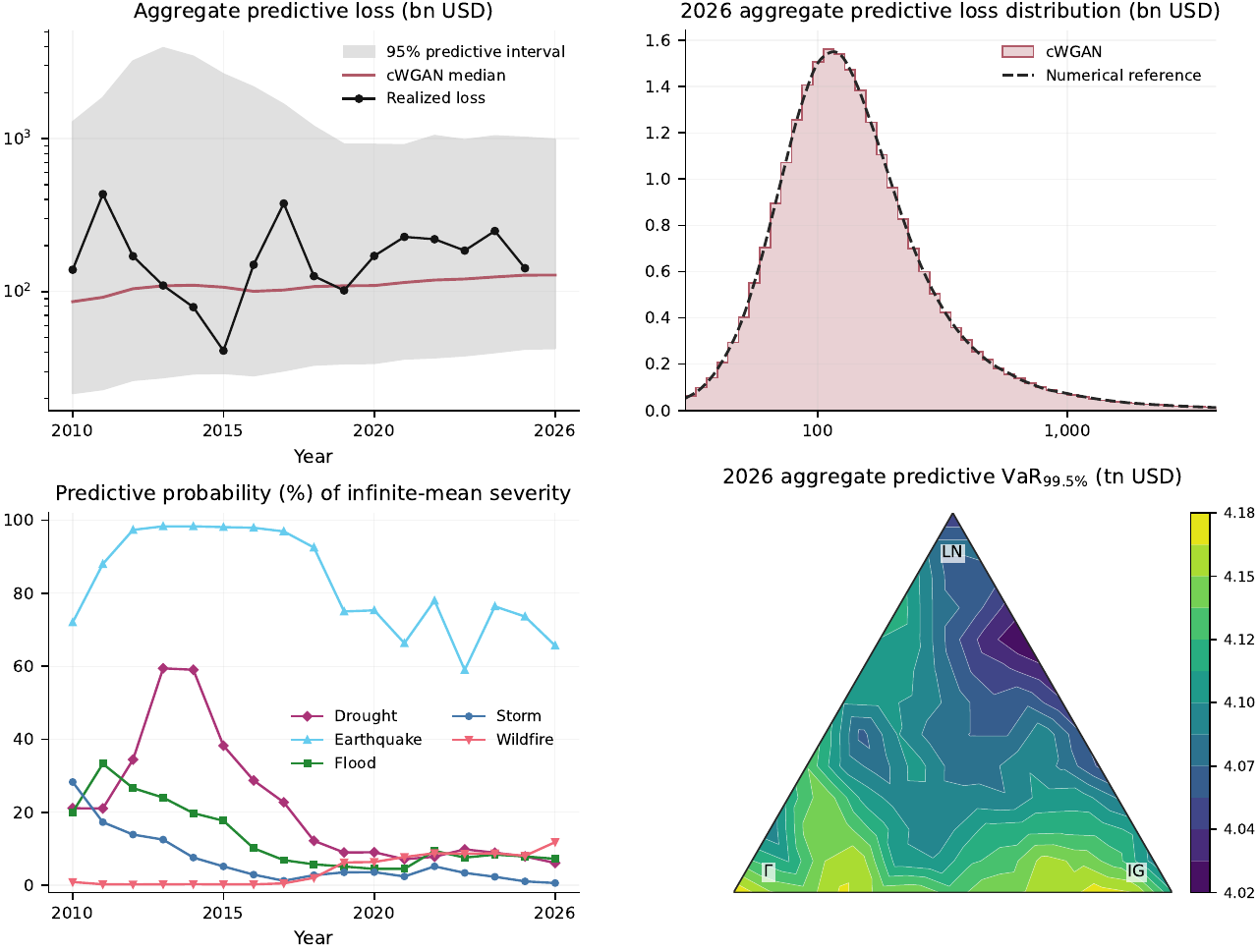}
\caption{Posterior-predictive analysis of extreme natural catastrophe losses. The panels report rolling one-year aggregate-loss forecasts, posterior probabilities that the Pareto shape does not exceed one, the 2026 aggregate-loss predictive distribution with its numerical reference, and the sensitivity of the 2026 aggregate \(\mathrm{VaR}_{99.5\%}\) to the severity-prior mixture weights.}
\label{fig:catastrophe_posterior_simplex}
\end{figure}

The lower-left panel of Figure~\ref{fig:catastrophe_posterior_simplex} explains why the rolling forecasts use medians rather than means.
For each peril $j\in\{1,\ldots,5\}$, let $A_j$ and $\Lambda_j$ denote its Pareto shape parameter and Poisson intensity, respectively.
Positive posterior mass on $A_j\leq1$ makes the posterior-predictive severity mean for peril $j$ infinite and, because $\Lambda_j>0$ almost surely, makes the corresponding peril-level aggregate-loss mean infinite; hence the total aggregate-loss mean is infinite as well.
Positive posterior mass on $A_j\leq2$ likewise precludes a finite second moment for that peril-level aggregate and hence for the total aggregate loss.
Among the five peril-specific shape parameters, the posterior probability of $A_j\leq1$ is largest for earthquakes.
For earthquakes, this probability increases after 2010, coinciding with six extreme events in 2010--2011 totaling $392.0$ billion USD, while the temporary increase in the drought probability follows two events in 2011--2012 totaling approximately $39.4$ billion USD.
The lower-right panel reports the 2026 aggregate 99.5$\%$ Value-at-Risk $\mathrm{VaR}_{99.5\%}$, which ranges from approximately $4.02$ to $4.18$ trillion USD across the severity simplex, with lower values near the inverse-Gaussian--lognormal edge and generally higher values near the Gamma--inverse-Gaussian edge.

\section{Conclusion}\label{sec:conclusion}
In this paper, we formulated repeated Bayesian inference in compound loss models as the amortized approximation of a conditional posterior law.
We developed a conditional Wasserstein GAN that generates draws from the corresponding posterior law conditional on the sample size and a sufficient statistic, the prior mean and coefficient of variation, and the mixture weights of the prior families.
Numerically, a single shared generator approximated the posterior laws of the Poisson intensity and Pareto shape parameter over a broad range of conditioning values.
The approximation was assessed by simulation-based calibration and by comparisons with deterministic quadrature.
In an application to data on extreme natural catastrophe losses, we produced rolling forecasts of aggregate losses from 2010 through 2025, and the resulting 2026 predictive distribution closely matched an extensive MCMC reference.
Moreover, the sensitivity of the aggregate $\operatorname{VaR}_{99.5\%}$ to the severity-prior mixture weights was quantified.

Our results are subject to several qualifications.
The cWGAN is studied numerically for scalar latent parameters, conditionally independent Poisson counts and Pareto severities, a fixed severity threshold, and models admitting low-dimensional sufficient statistics.
The models considered do not include general cross-peril dependence, frequency--severity dependence, covariates, censoring, truncation, reporting delays, or richer multilevel portfolio structures.
Moreover, the reported accuracy of the cWGAN approximation is tied to the conditioning region and prior families used in training and depends on the neural-network architecture and the hyperparameters used for adversarial optimization.
Natural extensions are to construct multivariate posterior samplers for dependent hierarchical loss models and to use learned permutation-invariant summaries when low-dimensional sufficient statistics are unavailable.

\section*{Data availability statement}
The code used for the numerical experiments is available upon reasonable request.

\bibliographystyle{abbrvnat}
\bibliography{cwvi-references}

\end{document}